\newcommand{\zz}{\mathbb{Z}}
\newcommand{\rr}{\mathbb{R}}
\newcommand{\cc}{\mathbb{C}}
\newtheorem{theorem}{Theorem}
\newtheorem{lemma}[theorem]{Lemma}
\newtheorem*{lemma*}{Lemma}
\newtheorem{proposition}[theorem]{Proposition}
\newtheorem{corollary}[theorem]{Corollary}
\newtheorem{remark}[theorem]{Remark}
\newtheorem*{open*}{Open~question}
\begin{document}

\title{Root Separation for Trinomials}

\author{Pascal Koiran\\
LIP\thanks{UMR 5668 ENS Lyon, CNRS, UCBL.
The author is supported by ANR project
CompA (code ANR--13--BS02--0001--01).
Email:  
{\tt Pascal.Koiran@ens-lyon.fr}},
Ecole Normale Sup\'erieure de Lyon, Universit\'e de Lyon.
}

\maketitle

\begin{abstract}
  We give a  separation bound for the 
  complex roots of a trinomial $f \in \zz[X]$. The logarithm of the inverse of our separation bound is polynomial in the size of the sparse encoding of $f$; in particular, it is polynomial in $\log (\deg f)$.   It is known that no such bound is possible for 4-nomials (polynomials with 4 monomials).
  For trinomials, the classical results (which are based on the degree of $f$ rather than
  the number of monomials) give separation bounds that are exponentially worse.

As an algorithmic application, we show that the number of real roots of a trinomial $f$ can be computed in time polynomial in the size of the sparse encoding of~$f$. The same problem is open for 4-nomials.
\end{abstract}

\section{Introduction}

Root separation is
a classical topic in the study of polynomials in a real or complex variable.
As a motivation, note  for instance that a lower bound on the separation between the real roots of a polynomial provides information on the accuracy with which each real root should be approximated in order to  obtain isolating intervals.
The following inequality due to Mahler~\cite{mahler64} (see also~\cite{mignotte82}) is an example of a well-known result on root separation.
More recent results can be found e.g. in~\cite{bugeaud16,bugeaud10}.
\begin{theorem}[Mahler] \label{mahler}
  Let $f \in \zz[X]$ be a squarefree (i.e., without multiple roots)
   polynomial of degree $d \geq 2$.
  If $x_1,x_2$ are two distinct complex roots of  $f$ we have
  $$|x_1 - x_2| > \sqrt{3}(d+1)^{-(2d+1)/2} H^{-d+1}$$
  where $H$ denotes the maximum of the absolute values of the coefficients
  of $f$.
\end{theorem}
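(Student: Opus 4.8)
The plan is to let the integrality of the discriminant supply the lower bound, and to control the remaining root differences by a Hadamard estimate on a suitably modified Vandermonde matrix. Write $a$ for the leading coefficient of $f$ and $x_1,\dots,x_d$ for all its complex roots. Since $f$ is squarefree with integer coefficients, $\operatorname{disc}(f)=a^{2d-2}\prod_{i<j}(x_i-x_j)^2$ is a nonzero integer, so $|\operatorname{disc}(f)|\ge 1$. Splitting off the factor attached to the chosen pair,
\[
1\ \le\ |a|^{2d-2}\,|x_1-x_2|^2\!\!\prod_{\substack{i<j\\(i,j)\neq(1,2)}}\!\!|x_i-x_j|^2 ,
\]
so it is enough to prove $|a|^{2d-2}\prod_{(i,j)\neq(1,2)}|x_i-x_j|^2\le \tfrac13 (d+1)^{2d+1}H^{2d-2}$, after which one takes square roots.

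For the key estimate I would proceed as follows. After relabelling the two roots we may assume $|x_1|\le|x_2|$. Take the Vandermonde matrix $M=(x_i^{\,j-1})_{1\le i,j\le d}$, with $\det M=\prod_{i<j}(x_j-x_i)$, and replace its second row by $R_2-R_1$: the determinant is unchanged, and the new second row is $(0,x_2-x_1,x_2^2-x_1^2,\dots,x_2^{d-1}-x_1^{d-1})$. Applying Hadamard's inequality to this matrix, every row other than the second has Euclidean norm at most $\sqrt d\,\max(1,|x_i|)^{d-1}$; for the second row, $|x_2^k-x_1^k|\le |x_2-x_1|\cdot k\cdot\max(1,|x_2|)^{k-1}$ because $|x_1|\le|x_2|$, and $\sum_{k\le d-1}k^2\le d^3/3$, so its norm is at most $|x_2-x_1|\cdot \tfrac{d^{3/2}}{\sqrt3}\,\max(1,|x_2|)^{d-2}$. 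Since $\det M=(x_2-x_1)\prod_{(i,j)\neq(1,2)}(x_j-x_i)$, the common factor $|x_2-x_1|$ appears on both sides of the Hadamard bound and cancels, leaving
\[
\prod_{\substack{i<j\\(i,j)\neq(1,2)}}\!\!|x_j-x_i|\ \le\ \frac{d^{(d+2)/2}}{\sqrt3}\;\prod_{i=1}^d\max(1,|x_i|)^{d-1}\ =\ \frac{d^{(d+2)/2}}{\sqrt3}\Bigl(\frac{M(f)}{|a|}\Bigr)^{d-1},
\]
where $M(f)=|a|\prod_i\max(1,|x_i|)$ is the Mahler measure of $f$ and the first inequality also uses $\max(1,|x_2|)\ge 1$ to raise the exponent $d-2$ to $d-1$.

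It then remains to assemble the pieces: squaring, multiplying by $|a|^{2d-2}$, and combining with the first display gives $|x_1-x_2|^2\ge 3\,d^{-(d+2)}M(f)^{-(2d-2)}$; Landau's inequality $M(f)\le\|f\|_2\le\sqrt{d+1}\,H$ then yields $|x_1-x_2|^2\ge 3\,d^{-(d+2)}(d+1)^{-(d-1)}H^{-(2d-2)}$, which is $>3(d+1)^{-(2d+1)}H^{-(2d-2)}$ since $d^{d+2}(d+1)^{d-1}<(d+1)^{2d+1}$; a square root gives the claim. I expect the one genuinely clever point — and the place where a naive attempt fails — to be the second step: one must replace row $2$ by $R_2-R_1$ (rather than divide it by $x_2-x_1$, or simply delete a row, or bound each $|x_i-x_j|$ by the diameter of the root set), so that the very quantity being estimated cancels identically; and one must relabel so that the untouched Vandermonde row belongs to the root of smaller modulus, which is what prevents the exponents of the $\max(1,|x_i|)$ from exceeding $d-1$. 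Without these two observations the same scheme only gives a bound with $H^{\Theta(d^2)}$, which is exponentially worse. Everything else is the routine constant bookkeeping ($\sum_{k\le d-1}k^2\le d^3/3$ producing the $\sqrt3$, and $d^{d+2}(d+1)^{d-1}<(d+1)^{2d+1}$ producing the exponent $(2d+1)/2$) plus the two standard black boxes used above: the root formula for $\operatorname{disc}(f)$ and Landau's inequality.
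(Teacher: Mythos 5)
The paper does not prove this statement: Theorem~\ref{mahler} is quoted as a known result with a citation to Mahler's 1964 paper, so there is no internal proof to compare against. Your argument is correct and is essentially Mahler's classical proof --- integrality of the discriminant $a^{2d-2}\prod_{i<j}(x_i-x_j)^2$, Hadamard's inequality applied to the Vandermonde matrix with row $R_2$ replaced by $R_2-R_1$ so that the factor $|x_1-x_2|$ cancels, and Landau's bound $M(f)\le\sqrt{d+1}\,H$; the constant bookkeeping (the $\sqrt{3}$ from $\sum_{k\le d-1}k^2\le d^3/3$, the relabelling so the untouched row carries the smaller modulus, and the strict inequality $d^{d+2}(d+1)^{d-1}<(d+1)^{2d+1}$) all checks out.
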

The first result of this paper is a separation bound for the real roots
of a trinomial, which we later extend to complex roots.
\begin{theorem} \label{main}
Consider a nonzero trinomial $f(x)=ax^{\alpha}+bx^{\beta}+cx^{\gamma} \in \zz[X]$
where $\alpha < \beta < \gamma$ and 
$\log \max(|a|,|b|,|c|,\alpha,\beta,\gamma) \leq s$ for some $s \geq 1$.
For any two distinct real roots $x_1,x_2$ of $f$ we have  $|x_1 - x_2| \geq \exp(-Cs^3)$ where $C$ is an absolute constant.
\end{theorem}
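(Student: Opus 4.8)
My plan is to reduce $f$ to a trinomial in ``reduced form'', extract the two binomial constraints that Rolle's theorem forces on the critical points, and then invoke a lower bound for a linear form in \emph{two} logarithms.

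I would first make some reductions. Writing $f(x)=x^{\alpha}g(x)$ with $g(x)=a+bx^{p}+cx^{q}$, $p=\beta-\alpha$, $q=\gamma-\alpha$, the nonzero roots of $f$ are exactly the roots of $g$, and $g$ still satisfies $\log\max(|a|,|b|,|c|,p,q)\le s$ with $a,b,c\ne0$. An elementary Cauchy-type estimate (if $|x|\le1$ and $g(x)=0$ then $|a|\le(|b|+|c|)|x|^{p}$) shows that every nonzero root of $g$ has absolute value at least $2^{-(s+1)}$. This disposes of the case $x_{1}=0$ and of the case where $x_{1},x_{2}$ have opposite signs (then $|x_{1}-x_{2}|=|x_{1}|+|x_{2}|\ge2^{-(s+1)}$), so I may assume $x_{1},x_{2}$ are distinct positive roots of $g$ (replacing $x$ by $-x$ if both are negative preserves the shape and the bound on $s$). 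A last reduction uses that the reciprocal polynomial $x^{q}g(1/x)=ax^{q}+bx^{q-p}+c$ is again such a trinomial, with roots the reciprocals of those of $g$, whose root differences do not exceed those of $g$ once the roots are $\ge1$; combined with the trivial estimate when one root is $\le1$ and the other is $>2$, this lets me assume $x_{1},x_{2}\in[2^{-(s+1)},2]$.

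The core step is the following. Say $x_{1}<x_{2}$ and put $\epsilon=x_{2}-x_{1}$. Applying Rolle's theorem to $g$ gives $x_{*}\in(x_{1},x_{2})$ with $g'(x_{*})=0$; since $g'(x)=x^{p-1}(pb+qc\,x^{q-p})$, this yields $x_{*}^{q-p}=-pb/(qc)=:A$. Applying Rolle's theorem to $\bar g(x):=g(x)/x^{p}=ax^{-p}+b+cx^{q-p}$ gives $x_{**}\in(x_{1},x_{2})$ with $\bar g'(x_{**})=0$, i.e. $x_{**}^{q}=pa/((q-p)c)=:B$. Both $A,B$ are positive (being positive powers of real numbers in $(x_{1},x_{2})\subset(0,\infty)$) and both are rationals of (multiplicative) height at most $2^{2s}$. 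Since $x_{*},x_{**}\in(x_{1},x_{2})\subseteq[2^{-(s+1)},2]$ we get $|\log x_{*}-\log x_{**}|\le\epsilon/\min(x_{1},x_{2})\le\epsilon\,2^{s+1}$; and $\log x_{*}=\tfrac1{q-p}\log A$, $\log x_{**}=\tfrac1q\log B$, so multiplying by $q(q-p)$,
$$|\Lambda|\le\epsilon\,2^{s+1}q(q-p)\le\epsilon\,2^{3s+1},\qquad \Lambda:=q\log A-(q-p)\log B.$$
It remains to note that $\Lambda\ne0$: since $\Lambda=q(q-p)(\log x_{*}-\log x_{**})$, vanishing of $\Lambda$ forces $x_{*}=x_{**}=:x_{0}$, and then $g'(x_{0})=0$ together with $\bar g'(x_{0})=0$ forces $g(x_{0})=0$ as well; thus $x_{0}\in(x_{1},x_{2})$ is a multiple root of $g$, so $g$ would have at least four positive roots counted with multiplicity ($x_{1}$, then $x_{0}$ twice, then $x_{2}$), contradicting Descartes' rule of signs for a trinomial. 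Finally, $\Lambda$ is a linear form in the two logarithms $\log A,\log B$ of rationals of logarithmic height $O(s)$ with integer coefficients $q,q-p=O(2^{s})$, so the Baker--Wüstholz theorem (case $n=2$, where the coefficient size enters only through $\log\max(q,q-p)=O(s)$) gives $\log|\Lambda|\ge-C_{1}h'(A)h'(B)\log\max(q,q-p)\ge-C_{1}'s^{3}$ for absolute constants. Hence $\epsilon\ge|\Lambda|\,2^{-3s-1}\ge\exp(-Cs^{3})$ for a suitable absolute $C$, as claimed.

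I expect the main obstacle to be the right \emph{setup}: realizing that applying Rolle to $g$ \emph{and} to $g/x^{p}$ produces two independent binomial equations $x_{*}^{q-p}=A$, $x_{**}^{q}=B$, which is what collapses the separation problem to a \emph{two}-term linear form in logarithms rather than a high-degree resultant/discriminant estimate (the latter only gives bounds exponential in $s$). Once that is in place, ruling out $\Lambda=0$, the root lower bound, and the normalizing reductions are all routine, and the cubic exponent in $s$ emerges precisely because a linear form in two logarithms of height-$O(s)$ rationals with coefficients of bit-size $O(s)$ is bounded below by $\exp(-O(s^{3}))$.
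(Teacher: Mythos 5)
Your proof is correct, but it takes a genuinely different route from the paper's. The paper applies Rolle once to get a critical point $m$ of $f$, lower-bounds $|f(m)|$ by applying Baker's theorem to the explicit binomial expression $f(m)=a+b(1-\beta/\gamma)m^{\beta}$, and then needs an upper bound on $|f'|$ over $[0,m]$ — which only works under the extra hypothesis $\gamma\ge 2\beta$, forcing a case split handled via the reciprocal polynomial. You instead apply Rolle twice, to $g$ and to $g/x^{p}$, trapping two critical points $x_{*},x_{**}$ in $(x_1,x_2)$ whose logarithms are explicit rational multiples of logarithms of rationals of height $e^{O(s)}$; the separation then comes from squeezing the linear form $q(q-p)(\log x_{*}-\log x_{**})$ between Baker's lower bound and the trivial upper bound $\epsilon\,q(q-p)/x_1$, with nonvanishing guaranteed by Descartes (a coincidence $x_{*}=x_{**}$ would force a double root of $g$ strictly between $x_1$ and $x_2$). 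This cleanly eliminates the derivative estimate and the $\gamma\ge 2\beta$ dichotomy — indeed your reduction to $x_1,x_2\in[2^{-(s+1)},2]$ is superfluous, since only the lower bound on $x_1$ enters your estimate, and your stated case split there is slightly off anyway (one root in $(1,2]$ and the other $>2$ is covered by passing to reciprocals, not by the ``trivial estimate''). What the paper's route buys in exchange is the intermediate bound $|f(m)|\ge\exp(-C_2s^3)$ of Lemma~\ref{min}, which is reused both in the root-counting algorithm of Section~\ref{algosec} and in the complex-root separation of Section~\ref{complexsec}; your two-critical-points trick is intrinsically real (it relies on both critical points lying in the interval $(x_1,x_2)$) and would not transfer to the Grace--Heawood setting. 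Two small points to tidy: justify $a,b,c\ne 0$ (if any coefficient vanishes, $g$ is a binomial with at most one positive root, so there is nothing to prove), and note that $A=-pb/(qc)>0$ and $B=pa/((q-p)c)>0$ are forced by the existence of the positive critical points, so Baker's theorem for positive rationals does apply.
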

For a polynomial $f (X)= \sum_{i=1}^k a_i X^{\alpha_i} \in \zz [X]$ with $k$ nonzero coefficients, let us define 
the sparse size of $f$ as the sum of the bit sizes\footnote{Throughout the paper, $\log$ will denote the natural logarithm.} of the coefficients and exponents of $f$, i.e., as
\begin{equation} \label{size}
\sum_{i=1}^k [\log(1+|a_i|)+\log(1+\alpha_i)].
\end{equation}
We sometimes use the name "size of the sparse encoding of $f$" for the same quantity.
For $K=\rr$ or $K=\cc$, we denote  by $\sigma_{K}(f)$ the separation of the real (respectively, complex) roots of $f$, i.e., the minimum distance between two distinct roots of $f$ in $K$; 
and we denote the inverse of this number by  $\sigma_{K}^{-1}(f)$.
Theorem~\ref{main} therefore shows that the real roots of  trinomials are well separated, i.e., $\log(\sigma_{\rr}^{-1}(f))$ is polynomially bounded in the sparse size of $f$.
This is already false for polynomials with 4 terms: Michael Sagraloff~\cite{Sagraloff14} has given a family of examples showing that for 4-nomials,  $\log \sigma_{\rr}^{-1}(f)$  can be exponential in the sparse size of $f$.
Note also that in Theorem~\ref{mahler}, $\log \sigma_{\cc}^{-1}(f)$ is polynomial (in fact, almost linear) in
$\deg f$ rather than in $\log (\deg f)$,
and this is unavoidable in the dense case~\cite{mignotte82}.

As an algorithmic application of Theorem~\ref{main} we show that the number of real roots of a trinomial $f$ can be computed in time polynomial in the sparse size of $f$. Our model of computation is the standard Turing machine model.
An algorithm for the same problem was  already proposed in~\cite{rojas05} but their computational model is different: instead of the Turing machine model, they use an arithmetic model (i.e., they count
the number of arithmetic operations of their algorithm rather than the number of bit-level operations).

The proof of Theorem~\ref{main} is fairly short but somewhat 
 technical.
 Here is an  overview of the argument.
It follows from Descartes' rule of signs that a polynomial with $t$ monomials has at most $t-1$ positive real roots.
For a trinomial we have at most 2 positive real roots, and a grand total of 5 real roots at most.
The hardest task is to separate two roots of the same sign. Assume for instance that $0 < x_1 < x_2$.
By Rolle's theorem there must exist a point $m$ between the two roots where $f'(m)=0$.
Dividing by $x^{\alpha}$ if necessary, we can assume that $f$ has a nonzero constant term.
Hence $f'$ is a binomial and we can compute explicitly $m$ and $f(m)$.
The lower bound on $|x_1 - x_2|$ then follows from two ingredients:
\begin{itemize}
\item[(i)] We use Baker's theorem on linear forms in logarithms to show that $|f(m)|$ cannot be ``too small.''
\item[(ii)] We show that $|f'|$ cannot be ``too large'' on $[0,m]$.
\end{itemize}
A lower bound on $m-x_1$ follows from (i) and (ii), and this is in turn a lower bound on $x_2-x_1$.
Step (ii) is elementary but can only be achieved under an additional assumption on the exponents $\beta$ and $\gamma$ such as e.g. $\gamma \geq 2\beta$. Fortunately, if this condition is not satisfied we can work with the reciprocal polynomial $f^R(x)=x^{\gamma}f(1/x)$ instead.

In the computer science literature, Baker's theorem was  used by Etessami et al.~\cite{Etessami14} 
to compare ``succinctly represented integers.'' The same result\footnote{Etessami et al. then give an application to maximum probability parsing, a topic that is not considered in~\cite{Bastani11} or in the present paper.} was published a little earlier by Bastani et al.~\cite{Bastani11} with essentially the same
proof (instead of Baker's theorem, they rely on a variation by Nesterenko).
We use Baker's theorem in essentially the same way as~\cite{Bastani11} and~\cite{Etessami14}, but instead of a comparison algorithm we derive a separation bound.

The role of Rolle's theorem in the proof of Theorem~\ref{main} is
the main obstacle toward a separation result for complex roots.
There is fortunately a significant body of work on adaptations of
Rolle's theorem to the complex domain (see e.g.~\cite{marden85}).
We use the Grace-Heawood theorem and a variation due to Marden~\cite{marden66}.

\subsection*{Organization of the paper}

We begin in Section~\ref{realsec}
with a proof of our separation bound for real roots
(Theorem~\ref{main}) because it is simpler than the corresponding
proof 
for the complex domain, and it provides the ingredients needed
for our algorithmic application.
This application (counting the number of real roots
of a trinomial in polynomial time) appears in Section~\ref{algosec}.
In Section~\ref{complexsec} we extend Theorem~\ref{main} to complex roots,
first for squarefree trinomials and then in Theorem~\ref{sepc}
for all trinomials.
We conclude in Section~\ref{final} with two open problems.
Throughout the paper, $C_1,C_2,\ldots...$ denote suitable 
positive constants.

\section{Separation of Real Roots} \label{realsec}

In this section we give a proof of Theorem~\ref{main}.
Baker's theorem on linear forms in logarithms is our main tool.
More precisely, we use the following version
by Baker and  W{\"u}stholz~\cite{baker93}.
\begin{theorem} \label{baker}
  Given two real numbers $s,t \geq 1$,
  let $a=(a_1,\ldots,a_n)$ be a list of $n$ positive rational numbers such that $a_i=p_i/q_i$ where the positive integers $p_i,q_i$ satisfy
  $\log \max(p_i,q_i) \leq s$ for all $i$;
let  $b=(b_1,\ldots,b_n)$ be a list of $n$ arbitrary integers such that $\log |b_i| \leq t$ for all $i$.
Let $$\Lambda(a,b)=b_1 \log a_1 + b_2 \log a_2 + \cdots + b_n \log a_n.$$

  If $\Lambda(a,b) \neq 0$ then $|\Lambda(a,b)| \geq \exp[-C(n)ts^n]$,
  where $$C(n)=18(n+1)!n^{n+1}32^{n+2}\log(2n).$$
\end{theorem}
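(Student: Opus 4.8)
The statement to establish is the Baker--W{\"u}stholz lower bound for a nonvanishing linear form in logarithms; since all the $a_i$ here are rational, this is the degree-one case of their general theorem, and the plan is to reproduce the Gelfond--Baker method with the explicit bookkeeping that produces the constant $C(n)$. I would begin by normalizing: after reordering assume $b_n \neq 0$, and rewrite the form as $\log a_n = \frac{1}{b_n}\bigl(\Lambda(a,b) - \sum_{i<n} b_i \log a_i\bigr)$, so that a hypothetically tiny value of $|\Lambda(a,b)|$ becomes the assertion that $\log a_n$ is extremely close to $-\sum_{i<n}(b_i/b_n)\log a_i$. The whole argument is a reductio: assume $0 < |\Lambda(a,b)| < \exp[-C(n)ts^n]$ and derive a contradiction by confronting an analytic upper bound with an arithmetic (Liouville-type) lower bound. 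The free parameters are a degree bound $L$ for an auxiliary construction and a range $R$ of interpolation points; these are chosen as explicit functions of $n$, $s$, $t$ at the very end, and the stated value of $C(n)$ is precisely what falls out of optimizing them.

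The first substantive step is the construction of an auxiliary function. I would seek a nonzero exponential polynomial
$$\Phi(z) = \sum_{\lambda_1=0}^{L}\cdots\sum_{\lambda_n=0}^{L} p(\lambda_1,\dots,\lambda_n)\, a_1^{\lambda_1 z}\cdots a_n^{\lambda_n z},$$
with rational-integer coefficients $p(\lambda)$, designed so that $\Phi$ together with its first several derivatives vanishes at a block of integer points $z=1,2,\dots,R$. Writing $a_i^{\lambda_i z}=\exp(\lambda_i z \log a_i)$ and imposing these vanishing conditions yields a homogeneous linear system in the unknowns $p(\lambda)$; since each $a_i$ is a rational of height at most $e^s$, the coefficients of this system are controlled in terms of $s$, $L$ and $R$. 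As soon as the number $(L+1)^n$ of unknowns exceeds the number of equations, Siegel's lemma (the Thue--Siegel--Dirichlet box principle) supplies a nontrivial integer solution with $\log\max|p(\lambda)|$ polynomially bounded in $L$, $R$ and $s$. This is the stage at which the height bound $s$ is injected into the final constant.

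The heart of the argument is then an extrapolation driven by the smallness of $\Lambda(a,b)$. Because $\log a_n$ differs from $-\sum_{i<n}(b_i/b_n)\log a_i$ only by the minuscule quantity $\Lambda(a,b)/b_n$, the values of $\Phi$ and its derivatives are not merely forced to vanish at the construction points but are provably tiny at further integer points; a maximum-modulus estimate (Schwarz lemma) on a disk of growing radius then upgrades ``tiny'' to ``exactly zero,'' enlarging the zero set. Iterating this analytic bootstrapping a bounded number of times forces $\Phi$ to vanish, with high multiplicity, at far more points than one started with. I expect this extrapolation, and the careful calibration of the radii and orders against the assumed size of $|\Lambda(a,b)|$, to be the most delicate bookkeeping in the proof.

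The contradiction comes from a zero estimate. A nonzero exponential polynomial with these exponents and degree at most $L$ in each variable cannot vanish to high order at more points than its ``volume'' permits; in the Baker--W{\"u}stholz formulation this is made precise by a multiplicity estimate on the commutative group variety $\mathbb{G}_a\times\mathbb{G}_m^{\,n}$ (Philippon's theorem), which is the genuinely hard ingredient and the main obstacle to any self-contained account. If the extrapolation forces more vanishing than the zero estimate allows, then $\Phi$ must be identically zero, contradicting Siegel's lemma; otherwise some nonzero value $\Phi^{(j)}(z_0)$, which is a rational number of bounded denominator $q$, is simultaneously smaller than $1$ in absolute value, contradicting the Liouville inequality $|\Phi^{(j)}(z_0)|\ge 1/q$. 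Tracking how $L$, $R$ and the coefficient heights depend on $n$, $s$, $t$ through all three estimates and optimizing then yields exactly $|\Lambda(a,b)|\ge\exp[-C(n)ts^n]$ with $C(n)=18(n+1)!\,n^{n+1}32^{n+2}\log(2n)$.
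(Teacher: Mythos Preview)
The paper does not prove this theorem at all: Theorem~\ref{baker} is simply quoted from Baker and W{\"u}stholz~\cite{baker93} and used as a black box, with a remark that the original statement is more general (algebraic $a_i$ rather than rational) and a pointer to~\cite{Etessami14} for the translation to the elementary setting. There is therefore no ``paper's own proof'' to compare against.

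Your outline is a fair high-level description of the Gelfond--Baker method (auxiliary function via Siegel's lemma, extrapolation using Schwarz-type estimates, contradiction via a zero/multiplicity estimate on $\mathbb{G}_a\times\mathbb{G}_m^{\,n}$), and you correctly flag the Philippon--W{\"u}stholz multiplicity estimate as the deep ingredient. But this is a multi-page argument in transcendental number theory, and nothing in your sketch pins down the specific constant $C(n)=18(n+1)!\,n^{n+1}32^{n+2}\log(2n)$; that value comes from the detailed optimization in~\cite{baker93} and cannot be recovered from the level of detail you give. For the purposes of this paper, the appropriate ``proof'' is a citation, not a reconstruction.
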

The theorem established in~\cite{baker93} is in fact more general:
the $a_i$'s are allowed to be algebraic numbers rather
than just integers.
The version stated in~\cite{Etessami14} is {\em less} general
than Theorem~\ref{baker} since it considers only the case when the $a_i$'s
are integers.
That paper also provides a detailed discussion, geared towards the nonexpert
in number theory, of the translation from the general statement
in~\cite{baker93} (where the role of the absolute value of integers is played by the Weil height of algebraic numbers) to a more elementary setting.

We will need to apply Theorem~\ref{baker} for $n = 2$ only.
Note that a linear form in 2 logarithms of rational numbers
can obviously be rewritten as a linear form in 4 logarithms of integers.
However, this would result in a severe degradation of our bounds due
to the presence  
of the factor $s^n$ in the lower bound for
$|\Lambda(a,b)|$ provided by Theorem~\ref{baker}.

\begin{proposition} \label{binomial}
  Let $g(x)=a_1x^{\beta} +a_2$ where  $a_1,a_2$ are two nonzero integers and $\beta$ is a positive rational number. Write $\beta=\beta_1 / \beta_2$, where $\beta_1$ and $\beta_2$ are positive integers.
  We evaluate the binomial $g$ at a rational point $p/q$ where
  $p$, $q$ are positive integers.
  
  Assume that $\log \max(p,q,|a_1|,|a_2|) \leq s$
  and $\log \max (\beta_1,\beta_2) \leq t$ where $s,t \geq 1$.
If $g(p/q) \neq 0$ then $|g(p/q)| \geq \exp(-C_1 ts^2)$.
\end{proposition}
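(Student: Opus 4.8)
The plan is to reduce the claimed lower bound on $|g(p/q)|$ to a statement about a linear form in two logarithms of rational numbers, so that Theorem~\ref{baker} applies with $n=2$. First I would observe that
$$g(p/q) = a_1 (p/q)^{\beta} + a_2 = a_1 \frac{p^{\beta_1/\beta_2}}{q^{\beta_1/\beta_2}} + a_2,$$
so $g(p/q) \neq 0$ forces $a_1 (p/q)^{\beta} = -a_2$ to fail, and I want to bound how far the ratio
$$r := \frac{a_1 (p/q)^{\beta}}{-a_2} = \frac{-a_1}{a_2}\left(\frac{p}{q}\right)^{\beta_1/\beta_2}$$
is from $1$. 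Taking logarithms (after first disposing of sign issues — if $r<0$ then $|g(p/q)| \geq |a_2| \geq 1$ and we are done, so assume $r>0$), we have
$$\log r = \log\!\left(\frac{-a_1}{a_2}\right) + \frac{\beta_1}{\beta_2}\log\!\left(\frac{p}{q}\right),$$
and multiplying through by $\beta_2$ gives a linear form
$$\Lambda = \beta_2 \log\!\left(\left|\frac{a_1}{a_2}\right|\right) + \beta_1 \log\!\left(\frac{p}{q}\right)$$
in the two logarithms $\log(|a_1/a_2|)$ and $\log(p/q)$, with integer coefficients $b_1 = \beta_2$, $b_2 = \beta_1$ (or their negatives, after possibly swapping $p$ and $q$ so that $p/q > 1$; the sign of the coefficients does not matter for Theorem~\ref{baker}).

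Next I would check the hypotheses of Theorem~\ref{baker} with $n=2$. The rational bases are $a_1/a_2$ (in lowest terms, numerator and denominator bounded by $\max(|a_1|,|a_2|) \leq e^s$) and $p/q$ (numerator and denominator bounded by $e^s$), so the "$s$" parameter of Theorem~\ref{baker} can be taken to be $s$; the integer coefficients $\beta_1,\beta_2$ satisfy $\log\max(\beta_1,\beta_2) \leq t$, so the "$t$" parameter is $t$. One subtlety is whether $\Lambda \neq 0$: this holds precisely because $g(p/q) \neq 0$ and $r > 0$ (if $\Lambda = 0$ then $r = 1$, i.e., $a_1(p/q)^\beta = -a_2$, i.e., $g(p/q)=0$). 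Theorem~\ref{baker} then yields $|\Lambda| \geq \exp[-C(2)\, t\, s^2]$ with $C(2)$ an explicit absolute constant.

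Finally I would translate the lower bound on $|\Lambda|$ back into a lower bound on $|g(p/q)|$. From $\Lambda = \beta_2 \log r$ we get $|\log r| = |\Lambda|/\beta_2 \geq |\Lambda|/e^{t}$, and since $|g(p/q)| = |a_2|\cdot|r - 1| \geq |r-1|$, I only need the elementary inequality $|r-1| \geq c\,|\log r|$ valid, say, for $r$ in a bounded range, together with a separate trivial argument when $r$ is far from $1$ (if $|r - 1| \geq 1$ then $|g(p/q)| \geq 1$). Combining, $|g(p/q)| \geq c'\,|\Lambda|\,e^{-t} \geq \exp(-C_1 t s^2)$ after absorbing the polynomially-bounded factors $e^{-t}$, $c'$ into the constant and using $s,t \geq 1$. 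The only mild obstacle is the bookkeeping of signs and the reduction from $|r-1|$ to $|\log r|$, which requires isolating the easy case where $r$ is not close to $1$; everything else is a direct invocation of Theorem~\ref{baker} with $n=2$, which is exactly why the statement is phrased with an $s^2$ (rather than $s^4$) exponent.
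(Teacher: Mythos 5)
Your proposal is correct and follows essentially the same route as the paper: form the linear form $\Lambda=\beta_2\log|a_1/a_2|+\beta_1\log(p/q)$ in two logarithms of rationals, apply Theorem~\ref{baker} with $n=2$, divide by $\beta_2\le e^t$, and convert the resulting bound on $|\log r|$ into a bound on $|a_2||r-1|$ via the easy-case split (signs agree, or $r$ far from $1$) plus a linearization of the exponential near $1$. The paper carries out that last step with an explicit convexity inequality for $\delta\in[-1,0]$, which is exactly the "bounded range" estimate you defer; no substantive difference.
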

\begin{proof}
  We can assume that $a_1$ and $a_2$ are of opposite signs: otherwise,
  $|g(p/q)| \geq |a_2| \geq 1$.
  Assume for instance that $a_2 < 0$ and $a_1 > 0$.

  In order to compare $a_1(p/q)^{\beta}$ to $|a_2|$, we compare the logarithms
  of these numbers.
  Let $$\delta = \log[a_1 (p/q)^{\beta}] - \log |a_2|=
  [\beta_2 \log (a_1/|a_2|) + \beta_1 \log (p/q)]/\beta_2.$$
  By Theorem~\ref{baker}, the absolute value of the numerator is lower bounded
  by $\exp[-C(2)ts^2]$.
  Since $\beta_2 \leq e^t$ we have $|\delta| \geq \exp[-C(2)ts^2-t]$.
  Consider first the case $\delta > 0$.
  We have $\log[a_1 (p/q)^{\beta}] = \log |a_2| + \delta$, therefore
  $a_1 (p/q)^{\beta} \geq |a_2| e^{\delta} \geq |a_2| (1+\delta)$
  and  $a_1 (p/q)^{\beta} - |a_2| \geq |a_2|\delta \geq \exp[-C(2)ts^2-t]$.
  We now turn our attention to the case $\delta < 0$, and distinguish
  two subcases.
  \begin{itemize}
  \item[(i)] If $\delta \leq -1$ we have $a_1 (p/q)^{\beta} \leq |a_2|/e$,
    hence $|g(p,q)| \geq |a_2| (1-1/e) \geq 1-1/e$.
  \item[(ii)] If $-1 \leq \delta \leq 0$ then by convexity of the exponential function:
    $$a_1 (p/q)^{\beta} = |a_2| e^{\delta} \leq |a_2|.[1+\delta(1-1/e)].$$
   Hence   $a_1 (p/q)^{\beta} - |a_2| \leq \delta(1-1/e) |a_2| \leq - (1-1/e)\exp[-C(2)ts^2-t].$
  \end{itemize}
  In all cases, $|g(p/q)| \geq \exp(-C_1ts^2)$ for some suitable constant $C_1>C(2)$.
\end{proof}

\begin{lemma} \label{fofm}
  Consider a trinomial $f(x)=a+bx^{\beta}+cx^{\gamma} \in \zz[X]$
  where $\gamma > \beta$. If $f'(m)=0$ at some point $m>0$
  then $m= |b\beta / c\gamma|^{1/(\gamma-\beta)}$ and
    $$f(m)=a+b\left(1-\frac{\beta}{\gamma}\right)m^{\beta}=a+b\left(1-\frac{\beta}{\gamma}\right) \left| \frac{b\beta}{c\gamma} \right|^{\beta/(\gamma-\beta)}.$$
\end{lemma}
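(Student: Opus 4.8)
The plan is to prove Lemma~\ref{fofm} by a direct computation, since the hypotheses give us enough structure to solve for $m$ explicitly. First I would differentiate $f(x)=a+bx^{\beta}+cx^{\gamma}$, obtaining $f'(x)=\beta b x^{\beta-1}+\gamma c x^{\gamma-1}$. Since we are told $f'(m)=0$ at a point $m>0$, I can factor out $x^{\beta-1}$ (which is nonzero at $m$ because $m>0$) to get $\beta b + \gamma c\, m^{\gamma-\beta}=0$, hence $m^{\gamma-\beta}=-b\beta/(c\gamma)$. Taking absolute values and the $(\gamma-\beta)$-th root (legitimate since $m>0$ and $\gamma-\beta>0$) yields $m=|b\beta/(c\gamma)|^{1/(\gamma-\beta)}$, which is the first claimed identity. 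Along the way one should note that $c\neq 0$ and $b\neq 0$ are forced: if $c=0$ then $f$ is a binomial whose derivative $\beta b x^{\beta-1}$ has no positive zero, and similarly $b=0$ would make $f'$ have no positive zero, so both coefficients are nonzero and the expressions above are well defined.

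Next I would substitute back to compute $f(m)$. From $m^{\gamma-\beta}=-b\beta/(c\gamma)$ we get $c\,m^{\gamma}=c\,m^{\beta}\cdot m^{\gamma-\beta}=c\,m^{\beta}\cdot(-b\beta/(c\gamma))=-\frac{\beta}{\gamma}\,b\,m^{\beta}$. Therefore
$$f(m)=a+b m^{\beta}+c m^{\gamma}=a+b m^{\beta}-\frac{\beta}{\gamma}b m^{\beta}=a+b\left(1-\frac{\beta}{\gamma}\right)m^{\beta}.$$
Finally, substituting $m^{\beta}=\bigl(m^{\gamma-\beta}\bigr)^{\beta/(\gamma-\beta)}=\bigl|b\beta/(c\gamma)\bigr|^{\beta/(\gamma-\beta)}$ (using again $m>0$ so that the real power is unambiguous and $m^{\beta}>0$) gives the second displayed formula.

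This lemma is essentially a routine computation, so I do not anticipate a genuine obstacle; the only points requiring a little care are the sign and positivity bookkeeping — namely justifying that $b,c\neq 0$ so the formula for $m$ makes sense, and being careful that all the fractional powers of $m$ are taken of the positive real number $m$ (so there is no ambiguity in choosing roots) while the absolute value appears only because $-b\beta/(c\gamma)$, although equal to the positive quantity $m^{\gamma-\beta}$, is written in terms of $b$ and $c$ whose signs are not pinned down. With these remarks in place the two identities follow immediately.
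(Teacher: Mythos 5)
Your computation is correct and is exactly the "routine calculation" the paper alludes to (the paper gives no further detail beyond noting, as you do, that $b$ and $c$ must be nonzero). Nothing to add.
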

This follows from a routine calculation. Note that $b$ and $c$ must be nonzero.
\begin{lemma} \label{min}
Consider a trinomial $f(x)=a+bx^{\beta}+cx^{\gamma} \in \zz[X]$
where $\gamma > \beta$, $a\neq 0$ and
$\log \max(|a|,|b|,|c|,\beta,\gamma) \leq s$ for some $s \geq 1$.
If $f'(m)=0$ at some point $m>0$ where $f(m) \neq 0$ then $|f(m)| \geq \exp(-C_2s^3)$.
\end{lemma}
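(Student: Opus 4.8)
The plan is to apply Proposition~\ref{binomial} to the explicit expression for $f(m)$ obtained in Lemma~\ref{fofm}. By that lemma, since $f'(m)=0$ we have
$$f(m)=a+b\left(1-\frac{\beta}{\gamma}\right)\left|\frac{b\beta}{c\gamma}\right|^{\beta/(\gamma-\beta)}.$$
Multiplying through to clear denominators, $f(m)$ has the same sign and, up to an easily controlled rational factor, the same absolute value as a binomial of the form $a_1 r^{\beta/(\gamma-\beta)} + a_2$ evaluated at a rational point, where $a_1$ collects $b$ together with a numerator coming from $(1-\beta/\gamma)$, $a_2$ collects $a$ together with the matching denominator, and $r = |b\beta/(c\gamma)|$ is a positive rational. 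The first step is therefore to carry out this bookkeeping carefully: write $f(m)=N/D$ where $D$ is a positive integer with $\log D$ bounded by a constant multiple of $s^2$ (it is essentially $(\gamma)^{\text{something}}$ raised to a power like $\beta$, or rather a product of at most $O(s)$ factors each of size $e^{O(s)}$ — this needs a clean estimate), and $N = a_1 r_1^{\rho} + a_2$ is (a constant multiple of) a binomial of the shape handled by Proposition~\ref{binomial}.

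The second step is to check the hypotheses of Proposition~\ref{binomial} for this binomial. The rational base $r_1$ (namely $|b\beta/(c\gamma)|$, or a suitable integer rescaling of it) has numerator and denominator bounded by $e^{O(s)}$, so its "$s$"-parameter is $O(s)$. The exponent is $\rho = \beta/(\gamma-\beta) = \beta_1/\beta_2$ in lowest terms with $\beta_1 \le \beta \le e^s$ and $\beta_2 \le \gamma-\beta \le e^s$, so its "$t$"-parameter is $O(s)$. The integer coefficients $a_1, a_2$ have absolute value bounded by $e^{O(s)}$ as well (here is where one must be a little careful: $a_1$ and $a_2$ may acquire factors like $\gamma$ or $\gamma-\beta$ from clearing $(1-\beta/\gamma)=(\gamma-\beta)/\gamma$, but these are still $e^{O(s)}$). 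Hence Proposition~\ref{binomial} gives $|N| \ge \exp(-C_1 \cdot O(s) \cdot O(s)^2) = \exp(-O(s^3))$, provided $N \ne 0$ — which holds precisely because $f(m) \ne 0$ by assumption. Combining with $\log D = O(s^2)$ yields $|f(m)| = |N|/D \ge \exp(-O(s^3))$, which is the claim.

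The main obstacle I expect is the denominator control in the first step: the quantity $\left|\frac{b\beta}{c\gamma}\right|^{\beta/(\gamma-\beta)}$ is an irrational power of a rational, and to reduce it to the integer-base setup of Proposition~\ref{binomial} one wants to write it as $\left(\frac{|b\beta|}{|c\gamma|}\right)^{\beta_1/\beta_2}$ and then think of $f(m)$ itself — not a cleared-denominator version — as a binomial $g(x)=a_1 x^{\beta_1/\beta_2}+a_2$ evaluated at the rational point $p/q = |b\beta|/|c\gamma|$. That is in fact cleaner: take $g(x) = b(\gamma-\beta)\,x^{\beta/(\gamma-\beta)} + a\gamma$, which has integer coefficients of size $e^{O(s)}$, exponent $\beta/(\gamma-\beta)$ with numerator and denominator at most $e^s$, and is evaluated at $p/q$ with $p=|b\beta|$, $q=|c\gamma|$ both at most $e^{2s}$; then $f(m) = g(p/q)/\gamma$ up to sign, so Proposition~\ref{binomial} with parameters $s'=O(s)$, $t'=O(s)$ gives $|g(p/q)| \ge \exp(-C_1 s' (s')^2) = \exp(-O(s^3))$ and hence $|f(m)| \ge \exp(-O(s^3))/\gamma \ge \exp(-C_2 s^3)$ after absorbing the factor $\gamma \le e^s$ into the constant. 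The only remaining thing to verify is the sign/absolute-value matching and that one genuinely has $g(p/q)\neq 0 \iff f(m)\neq 0$, which is immediate.
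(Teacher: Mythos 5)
Your final paragraph is exactly the paper's (one-line) proof: write $\gamma f(m)=g(|b\beta|/|c\gamma|)$ with $g(x)=b(\gamma-\beta)x^{\beta/(\gamma-\beta)}+a\gamma$ and invoke Proposition~\ref{binomial} with parameters $O(s)$ and $t=O(s)$, absorbing the factor $\gamma\leq e^s$ into the constant; this is correct, and the hypothesis $a\neq 0$ (together with $b\neq 0$ from Lemma~\ref{fofm}) is what guarantees the coefficients of $g$ are nonzero as the proposition requires. The earlier ``clear denominators to an integer $N/D$'' detour is a dead end since the exponent is irrational, but you correctly abandon it, so the proposal matches the paper's argument.
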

This follows from Lemma~\ref{fofm} and Proposition~\ref{binomial}
(which we apply with $t=s$).
Note that we need the hypothesis $a \neq 0$ in order to apply this proposition.
\begin{lemma} \label{f'}
  Consider a trinomial $f(x)=a+bx^{\beta}+cx^{\gamma} \in \zz[X]$ where
  $\gamma \geq 2\beta$. If $f'(m)=0$ at some point $m>0$ then
  $\sup_{x \in [0,m]} |f'(x)| \leq b^2 \beta$. 
\end{lemma}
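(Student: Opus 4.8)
The plan is to factor $f'$ and exploit the exact cancellation forced by the hypothesis $f'(m)=0$; a naive term-by-term estimate of $f'$ would only yield the weaker bound $2b^2\beta$. Since $f\in\zz[X]$, the exponents $\beta$ and $\gamma$ are positive integers with $\gamma\ge 2\beta$ (so in particular $\beta\ge 1$), and $b,c\neq 0$ by Lemma~\ref{fofm}. I would start from
\[
  f'(x)=b\beta\,x^{\beta-1}+c\gamma\,x^{\gamma-1}=x^{\beta-1}\bigl(b\beta+c\gamma\,x^{\gamma-\beta}\bigr),
\]
and record that by Lemma~\ref{fofm} we have $m^{\gamma-\beta}=|b\beta/(c\gamma)|$, so that $b\beta+c\gamma\,m^{\gamma-\beta}=0$.

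Next I would bound the two factors of $f'$ separately on $[0,m]$. The map $x\mapsto x^{\beta-1}$ is nondecreasing there (this is where $\beta-1\ge 0$ is used), hence is at most $m^{\beta-1}$. The second factor, viewed as a function of $u=x^{\gamma-\beta}$ ranging over $[0,m^{\gamma-\beta}]$, is affine in $u$, equals $b\beta$ at $u=0$, and vanishes at $u=m^{\gamma-\beta}$; an affine function on an interval is extremal at an endpoint, so its absolute value on $[0,m]$ is at most $|b\beta|=|b|\beta$. Multiplying the two estimates gives $\sup_{x\in[0,m]}|f'(x)|\le |b|\beta\cdot m^{\beta-1}$, so it only remains to prove $m^{\beta-1}\le |b|$.

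This last step is where the hypothesis $\gamma\ge 2\beta$ genuinely enters. From $m^{\gamma-\beta}=|b|\beta/(|c|\gamma)$ together with $|c|\ge 1$ and $\gamma\ge\beta$ one gets $m^{\gamma-\beta}\le |b|$. Since $0\le\beta-1<\beta\le\gamma-\beta$, if $m\le 1$ then $m^{\beta-1}\le 1\le|b|$, while if $m\ge 1$ then $m^{\beta-1}\le m^{\gamma-\beta}\le|b|$; either way $m^{\beta-1}\le|b|$, and the bound $b^2\beta$ follows. The only real subtlety is the one already noted: obtaining the clean constant $b^2\beta$ forces one to use the exact vanishing of $b\beta+c\gamma\,x^{\gamma-\beta}$ at $x=m$ rather than the triangle inequality, and the assumption $\gamma\ge 2\beta$ is needed precisely to control $m^{\beta-1}$ by $m^{\gamma-\beta}$ in the regime $m\ge 1$.
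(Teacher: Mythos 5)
Your proof is correct, and it takes a route that is genuinely different from the paper's primary argument. The paper locates the maximum of $|f'|$ on $[0,m]$ either at $x=0$ or at an interior zero $r$ of $f''$, and then evaluates $f'(r)$ in closed form by applying Lemma~\ref{fofm} to the binomial $f'$; that route produces the sharper expression~(\ref{f'r}), from which one can also read off that some hypothesis like $\gamma\ge 2\beta$ is genuinely \emph{necessary} to control $|f'|$. Your factorization $f'(x)=x^{\beta-1}\bigl(b\beta+c\gamma x^{\gamma-\beta}\bigr)$, with the second factor affine in $u=x^{\gamma-\beta}$ and interpolating between $b\beta$ and $0$, is instead close in spirit to the alternative argument sketched in the Remark immediately following the lemma (which exploits that $b$ and $c$ have opposite signs to replace the sum of the two monomials by their maximum). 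Both your route and the Remark's arrive at the same intermediate bound $|b|\beta\, m^{\beta-1}$ and the same final step $m^{\beta-1}\le |b|$, which you justify correctly by splitting on $m\le 1$ versus $m\ge 1$ and combining $\beta-1<\gamma-\beta$ (this is where $\gamma\ge 2\beta$ enters) with $m^{\gamma-\beta}=|b|\beta/(|c|\gamma)\le |b|$. What your approach buys is that it avoids the second derivative entirely and needs no case distinction on $\beta=1$; what the paper's main proof buys is the exact formula~(\ref{f'r}) and the resulting insight about the necessity of the exponent condition. One tiny quibble: your opening claim that a term-by-term estimate can only give $2b^2\beta$ is slightly pessimistic, since $b$ and $c$ must have opposite signs (as $f'$ has a positive root), so the two monomials never reinforce each other and the triangle inequality can be replaced by a maximum, as the paper's Remark observes; but nothing in your argument depends on this aside.
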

\begin{proof}
  The supremum is reached at 0, or at a point $r \in ]0,m[$ where $f''(r)=0$.
    We are in the first case if (and only if)  $\beta=1$, and then $f'(0)=b$.
    In the second case, 
    \begin{equation} \label{f'r}
    f'(r) = b \beta \left(1-\frac{\beta-1}{\gamma-1}\right)\left|
    \frac{b\beta (\beta-1)}{c\gamma(\gamma-1)}\right|^{(\beta-1)/(\gamma-\beta)}.
    \end{equation}
    This follows from a direct computation,
    or an application of Lemma~\ref{fofm} to $f'(x)=b\beta x^{\beta-1}+c\gamma x^{\gamma-1}$ (which we can treat as a trinomial without a constant term).
    Taking into account the condition $\gamma \geq 2\beta$,
    we see that the exponent $(\beta-1)/(\gamma-\beta)$ in~(\ref{f'r}) is smaller than 1. The result then follows from  the crude upper bound: $\displaystyle \left|
    \frac{b\beta (\beta-1)}{c\gamma(\gamma-1)}\right| \leq |b|$. 
        \end{proof}

\begin{remark}
In the previous lemma, we can  bound $|f'|$ without using the second derivative of $f$. Since $f'$ has a positive root, $b$ and $c$ must be of oppositive signs.
Hence for $x \geq 0$,  $|f'(x)|=|b\beta x^{\beta-1}+c\gamma x^{\gamma-1}| \leq \max(|b|\beta x^{\beta-1}, |c|\gamma x^{\gamma-1})$
and   $\sup_{x \in [0,m]} |f'(x)| \leq \max(|b|\beta m^{\beta-1},| c|\gamma m^{\gamma-1}) =
 |b| \cdot \beta  \cdot |b\beta / c\gamma|^{{(\beta-1)}/(\gamma-\beta)} \leq b^2 \beta$.
 This is the same bound as in the statement of Lemma~\ref{f'}, but slightly worse than~(\ref{f'r}). 
 One advantage of the proof of Lemma~\ref{f'} is that~(\ref{f'r}) shows that some relation between $\beta$ and $\gamma$ (such as $\gamma \geq 2\beta$) is {\em necessary} in order to obtain a good control on $|f'|$.
 \end{remark}
 
 Before proving our root separation result, we recall Cauchy's bound~\cite{cauchy1829}. A number of similar bounds can be found in~\cite{rahman2002}, chapter~8.
 \begin{proposition} \label{cauchy}
 Let $P (X) = \sum_{ i = 0}^d a_i X^i$ be a complex polynomial of degree $d$. For any root $x$ of $P$, we have 
 $$|x| \leq 1+\max\left(\frac{|a_0|}{|a_d|},\frac{|a_1|}{|a_d|},\ldots,\frac{|a_{d-1}|}{|a_d|}\right).$$
 \end{proposition}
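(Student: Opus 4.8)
The plan is to argue by contradiction from the defining relation $a_d x^d = -\sum_{i=0}^{d-1} a_i x^i$, using nothing more than the triangle inequality and the geometric series formula. Set $M = \max_{0 \le i \le d-1} |a_i|/|a_d|$, so that the claimed inequality is $|x| \le 1+M$. First I would dispose of the trivial case: if $|x| \le 1$ there is nothing to prove, since $1+M \ge 1$ (here $M \ge 0$ because $a_d \neq 0$). Hence we may assume $|x| > 1$, which is precisely what is needed to sum a geometric progression with ratio $|x|$.

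Next, since $x$ is a root of $P$ we have $a_d x^d = -(a_{d-1}x^{d-1} + \cdots + a_1 x + a_0)$. Taking absolute values, applying the triangle inequality, and bounding $|a_i| \le M|a_d|$ for each $i < d$ gives
$$|a_d|\,|x|^d \;\le\; \sum_{i=0}^{d-1} |a_i|\,|x|^i \;\le\; M|a_d| \sum_{i=0}^{d-1} |x|^i \;=\; M|a_d|\,\frac{|x|^d-1}{|x|-1} \;<\; M|a_d|\,\frac{|x|^d}{|x|-1},$$
where the equality uses $|x|>1$ and the last step uses $|x|^d - 1 < |x|^d$. Dividing through by the positive quantity $|a_d|\,|x|^d$ yields $1 < M/(|x|-1)$, i.e. $|x| < 1+M$. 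Combining with the trivial case, every root satisfies $|x| \le 1+M$, as claimed.

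There is no real obstacle here: the argument is completely elementary. The only points deserving a moment's attention are the case split on $|x|\le 1$ versus $|x|>1$ (needed so that the geometric sum converges in the useful direction) and the observation that $a_d\neq 0$ guarantees $M$ is well defined and nonnegative. If desired, one can note that the bound is essentially sharp by inspecting $P(X)=X^d-M(X^{d-1}+\cdots+X+1)$, whose largest real root tends to $1+M$; but this is not required for the statement.
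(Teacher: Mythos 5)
Your argument is correct and complete: it is the classical proof of Cauchy's bound via the triangle inequality and the geometric series, and the case split on $|x|\le 1$ versus $|x|>1$ is exactly the right way to handle it. The paper itself gives no proof of this proposition --- it simply cites Cauchy (1829) and Rahman--Schmeisser --- so your write-up supplies the standard argument the paper relies on by reference. The only microscopic point one could quibble with is that the strict inequality $M|a_d|\frac{|x|^d-1}{|x|-1} < M|a_d|\frac{|x|^d}{|x|-1}$ degenerates to equality when $M=0$, but in that case the first inequality $|a_d|\,|x|^d \le 0$ already gives the contradiction, so nothing is lost.
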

 
 \begin{corollary} \label{smallroots}
  Let $P (X) = \sum_{ i = 0}^d a_i X^i$ be a complex polynomial with $a_0 \neq 0$.  For any root $x$ of $P$, we have 
 $$|x| \geq \frac{1}{ 1+\max\left(\frac{|a_d|}{|a_0|},\frac{|a_{d-1}|}{|a_0|},\ldots,\frac{|a_{1}|}{|a_0|}\right)}.$$
 
 Moreover, for a polynomial $f(X)=\sum_{i=0}^d f_i X^i \in \zz[X]$ with $|f_i| \leq M$ for all $i$, any nonzero root 
 $x$ of $f$ satisfies $|x| \geq 1/(M+1)$.
 \end{corollary}
 The first part of this corollary follows from an application of  Proposition~\ref{cauchy} to the reciprocal polynomial $P^R(X)=X^d P(1/X)$. The second part follows from the first by factoring out the highest power of $x$ which divides $f$.
 We can now give the proof of our main result.

\begin{proof}[Proof of Theorem~\ref{main}]
We first consider the case where one of the two roots, e.g. $x_1$, is equal to $0$. By the second part of Corollary~\ref{smallroots} we have $|x_1-x_2| \geq 1/(1+e^s)$.
If the two roots are nonzero and of oppositive signs, for the same reason we have  $|x_1-x_2| \geq 2/(1+e^s)$.
It therefore remains to consider the case where the two roots are nonzero and of the same sign. 

Assume for instance
that $0 < x_1 < x_2$. 
The first coefficient $a$ must be nonzero, otherwise $f$ is a binomial and can have at most one positive root. Also, we may (and will) assume that $\alpha=0$ by factoring out $x^{\alpha}$ if necessary.
By Rolle's theorem, there is a point $m \in ]x_1,x_2[$ where $f'(m)=0$. Since $f$ is a trinomial it can have at most two positive roots, hence $f(m) \neq 0$. We can therefore apply Lemma~\ref{min} and conclude that $|f(m)| \geq \exp(-C_2s^3)$.
Such a lower bound is useful because $$|x_2 - x_1| \geq |m-x_1| \geq \frac{|f(m)-f(x_1)| }{\sup_{x\in [x_1,m]} |f'(x)| } \geq  
\frac{|f(m)|}{ \sup_{x \in [0,m] }|f'(x)|}.$$
At this point, we need to distinguish two further cases.
\begin{itemize}
\item[(i)] If $\gamma \geq 2\beta$ we can apply Lemma~\ref{f'} and conclude that $$|x_2 - x_1| \geq \frac{\exp(-C_2s^3)}{b^2\beta} \geq \exp(-C_2s^3-3s).$$
\item[(ii)] If $\gamma < 2\beta$ we consider the reciprocal polynomial $f^R(x)=x^{\gamma}f(1/x)=c+bx^{\gamma-\beta}+ax^{\gamma}.$ 
In $f^R$ the ratio of the two highest exponents is $$\frac{\gamma }{\gamma - \beta} = \frac{1 }{1-\beta/\gamma} 
\geq  2.$$
It follows that $f^R$ falls within the scope of case (i); since $1/x_1$ and $1/x_2$ are roots of $f^R$, the lower bound 
from case~(i) applies to $|\frac{1}{x_1}-\frac{1}{x_2}|.$
This yields a lower bound on $|x_2-x_1|$ since 
$$|x_2-x_1|=x_1x_2 |\frac{1}{x_1}-\frac{1}{x_2}|$$
and we have already obtained the lower bound $x_2 \geq x_1 \geq 1/(1+e^s)$.
\end{itemize}
We conclude that $|x_1 - x_2| \geq \exp(-Cs^3)$ in all cases, for some appropriate constant $C > C_2$. 
\end{proof}

\section{Computing the Number of Real Roots} \label{algosec}

We now give an algorithmic application of Theorem~\ref{main}.
\begin{theorem} \label{isolation}
The number of real roots of a trinomial  $f(x)=ax^{\alpha}+bx^{\beta}+cx^{\gamma} \in \zz[X]$ can be computed in time polynomial in the sparse size of $f$.
\end{theorem}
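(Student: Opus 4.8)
The plan is to leverage the separation bound of Theorem~\ref{main} to turn the problem of counting real roots into a finite, controlled search. The key observation is that a trinomial $f$ has at most five real roots (at most two positive, at most two negative, and possibly $0$), so it suffices to \emph{locate} each of them up to an accuracy good enough to distinguish genuine roots from near-misses. Theorem~\ref{main} tells us that any two distinct real roots are at least $\exp(-Cs^3)$ apart, and Cauchy's bound (Proposition~\ref{cauchy}) together with Corollary~\ref{smallroots} tells us that every nonzero real root lies in the annulus $1/(1+e^s) \leq |x| \leq 1+e^s$. Thus all the action happens in a region of size at most $2(1+e^s)$, and we want to resolve it at scale $\exp(-Cs^3)$; a grid or subdivision scheme with roughly $\exp(C's^3)$ nodes would suffice in principle, but that is exponential in $s$, so a naive grid search is too slow and the real work is to do the search cleverly.

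The approach I would take is a sign-variation / bisection argument exploiting that $f$ is a trinomial. First, handle $x=0$ trivially (it is a root iff $\alpha>0$ after we have checked $c\neq 0$; more precisely iff the lowest exponent is positive). For the positive roots, factor out $x^\alpha$ so that $f(x)=a+bx^{\beta-\alpha}+cx^{\gamma-\alpha}$ with $a\neq 0$, and recall that by Descartes' rule there are at most two positive roots, and the number of positive roots has the same parity as the number of sign changes in $(a,b,c)$. The derivative $f'$ is (up to a power of $x$) a binomial with a single positive root $m=|b\beta/c\gamma|^{1/(\gamma-\beta)}$, which we can describe exactly. So $f$ is monotone on $]0,m]$ and monotone on $[m,+\infty[$: there are exactly two positive roots iff $f$ changes sign across $m$ \emph{and} the relevant endpoint signs work out, exactly one positive root in the "generic descartes" case, and we can read everything off from the signs of $f$ at $0^+$, at $m$, and at $+\infty$ (the latter two determined by $\mathrm{sign}(a)$, $\mathrm{sign}(c)$, and $\mathrm{sign}(f(m))$). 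The only subtlety is deciding the sign of $f(m)$, and whether $f(m)=0$ exactly.

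Deciding $\mathrm{sign}(f(m))$ is where the separation bound does its job. From Lemma~\ref{fofm}, $f(m)=a+b(1-\beta/\gamma)|b\beta/c\gamma|^{\beta/(\gamma-\beta)}$, so it suffices to compare the rational number $|a\gamma/(b(\gamma-\beta))|$ with the (possibly irrational) power $|b\beta/c\gamma|^{\beta/(\gamma-\beta)}$; equivalently, to decide the sign of a linear form in two logarithms of rationals of bit-size $O(s)$ with integer coefficients of bit-size $O(s)$. By Baker's theorem (Theorem~\ref{baker}, or directly the proof of Proposition~\ref{binomial}), this quantity is either zero or at least $\exp(-C_1 s^3)$ in absolute value, so computing the two sides to $O(s^3)$ bits of accuracy — which is polynomial in the sparse size of $f$ — determines the sign, and if the computed value is within $\exp(-C_1 s^3)$ of zero we may safely conclude $f(m)=0$. (Numerically, $\log m$ is a logarithm of a rational of bit-size $O(s)$ scaled by $\beta/(\gamma-\beta)$, which is at most $e^{2s}$ in absolute value, so all logarithms and powers involved can be approximated to the required precision in time polynomial in $s$.) Having determined whether $0$, $m$, $+\infty$ give $f$ the value $0$ or a definite sign, we count positive roots by a trivial case analysis; the negative roots are handled identically after the substitution $x\mapsto -x$ (which keeps $f$ a trinomial), and we sum up.

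The main obstacle is the sign-of-$f(m)$ computation: one must be careful that Baker's theorem is being applied to a genuinely nonzero linear form when $f(m)\neq 0$, that the bit-complexity of approximating the irrational power to $\mathrm{poly}(s)$ bits is itself $\mathrm{poly}(s)$ on a Turing machine (this requires approximating $\log$ of an $O(s)$-bit rational and then multiplying by an exponent of magnitude up to $e^{2s}$, so intermediate numbers have $\mathrm{poly}(s)$ bits), and that the borderline case $f(m)=0$ — where $m$ is a double root — is correctly recognized rather than mistaken for "very close to a root." Everything else (Descartes parity, monotonicity on the two intervals, the endpoint-sign bookkeeping, the reduction from negative to positive roots) is elementary.
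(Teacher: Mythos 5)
Your proposal is correct and follows essentially the same route as the paper: reduce to counting positive roots, observe that the count is determined by the sign of $f$ at the unique positive critical point $m$ from Lemma~\ref{fofm}, and decide that sign via Baker's theorem on a linear form in two logarithms of $O(s)$-bit rationals. The paper packages the final step as a comparison of ``succinctly represented integers'' and cites the known polynomial-time algorithm, whereas you spell out the underlying approximate-to-$O(s^3)$-bits procedure and explicitly handle the borderline case $f(m)=0$ (a double root), which the paper's ``0 or 2 positive roots'' dichotomy glosses over.
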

\begin{proof}
We will assume that the 3 coefficients $a,b,c$ are nonzero: if one coefficient vanishes $f$ is a binomial and we leave this easy case to the reader.
It is also easy to reduce to the counting of positive roots: $f$ has 0 as a root if and only if $\alpha>0$, and the number
of negative roots of $f$ is the number of positive roots of $f(-x)$.
It therefore remains to decide whether the input trinomial $f$ has 0, 1 or 2 positive roots.
Dividing by $x^{\alpha}$ if necessary, we will further assume that $\alpha=0$.
If $a$ and $c$ are of opposite signs, $f$ has exactly one positive root: it must have a root since $f$ takes opposite signs
at 0 and $+\infty$, and for the same reason we cannot have two roots.

When $a$ and $c$ are of the same sign, there are 0 or 2 positive roots and we determine which of these two values is the correct one using Lemma~\ref{fofm}. More precisely, assume for instance that $a$ and $c$ are positive.
We'll have 2 positive roots if and only if there exists $m>0$ such that $f'(m)=0$ and $f(m)<0$.
By Lemma~\ref{fofm}, we are in this case if and only if $b<0$ and 
   $$\frac{a\gamma}{|b|(\gamma-\beta)} < \left| \frac{b\beta}{c\gamma} \right|^{\beta/(\gamma-\beta)}.$$
This is equivalent to

$$  \left| \frac{a\gamma}{b(\gamma-\beta)}\right|^{\gamma-\beta} < \left| \frac{b\beta}{c\gamma} \right|^{\beta},$$
and therefore to:
\begin{equation} \label{discriminant}
  (a\gamma)^{\gamma-\beta}(c\gamma)^{\beta} < | b\beta|^{\beta} |b(\gamma-\beta)|^{\gamma-\beta}.
  \end{equation}
The latter inequality is an instance of the problem of comparing "succinctly represented integers"~\cite{Etessami14}.
As pointed out above, it is shown in~\cite{Etessami14} and~\cite{Bastani11}
how to solve this problem in polynomial time using Baker's theorem.
\end{proof}
As pointed out by an anonymous referee, checking~(\ref{discriminant}) amounts
to computing the sign of the so-called ``$A$-discriminant'' of the trinomial
$a+bx^{\beta}+cx^{\gamma}$ (Proposition~1.8 in~\cite{gelfand}, chapter 9).

The first version of this
paper\footnote{\href{https://arxiv.org/abs/1709.03294v1}{arxiv.org/abs/1709.03294v1}} claims that Theorem~\ref{isolation}
follows from  the algorithm of Jindal and Sagraloff~\cite{jindal17} for
the isolation of real roots of sparse polynomials. As pointed out by Michael Sagraloff (personal communication), this requires a more careful justification.
I wrote that the complexity of their algorithm is polynomial in the sparse size of $f$ and in $\log \sigma_{\rr}^{-1}(f)$.
This was a misstatement of their result: even if the real roots of $f$
are well separated, their algorithm may not run in polynomial time if
$f$ has two conjugate complex roots with very small imaginary parts.
In the full version of their paper~\cite{jindal18}, Jindal and Sagraloff
show that the conjugate complex roots of trinomials are well separated.
This implies that the real roots of trinomials can indeed be isolated in
polynomial time. In their paper, Jindal and Sagraloff provide explicit
constants for their separation bound. This is required for an actual
implementation of their algorithm, especially for the correct
treatment of double roots (see Section~9 of~\cite{jindal18} for details).
The fact that conjugate complex roots of trinomials are well separated
also follows from the results of the next section, where we show that
{\em all} complex roots of trinomials are well separated.

\section{Separation of Complex Roots} \label{complexsec}

In this section we show that the separation bound of Theorem~\ref{main}
also applies to complex roots of trinomials.
We begin with the the easy case of binomials.
In this case, the nonzero roots are uniformly distributed on a circle.
\begin{proposition} \label{binsep}
  Consider a binomial $f(x)=bx^{\beta}+cx^{\gamma} \in \zz[X]$ with $b,c \neq 0$ and $\beta < \gamma$.
  For any two complex roots $x_1,x_2$ of $f$ we have: $$|x_1 - x_2| \geq \frac{1}{|c|}\sqrt{2\left(1-\cos \frac{2\pi}{\gamma-\beta}\right)}.$$
  
\end{proposition}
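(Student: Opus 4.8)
The plan is to factor out the common power $x^{\beta}$ and reduce to counting the roots of $bx^{\beta}(x^{\gamma-\beta} + c/b)$ — wait, let me reconsider, since we want roots, not a clean factorization of integer polynomials.

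Let me think about this more carefully.The plan is to exploit the fact that, after factoring out the common power $x^{\beta}$, we have $f(x) = x^{\beta}(b + cx^{\gamma-\beta})$, so the nonzero roots of $f$ are exactly the $(\gamma-\beta)$-th roots of $-b/c$. Write $n = \gamma - \beta$ and $\rho = |b/c|^{1/n}$; then the nonzero roots are the $n$ points $\rho\, e^{\mathrm{i}\theta_k}$ equally spaced on the circle of radius $\rho$, at angular spacing $2\pi/n$. If $\beta > 0$ there is also the root $0$, but it sits at the center of that circle, so its distance to any nonzero root is $\rho$, which exceeds the chord length between two adjacent nonzero roots; hence the minimum distance is always achieved by two adjacent points on the circle (or, if $n=1$, the unique nonzero root and $0$, but then there is only one pair and nothing to separate unless $\beta>0$, a case one checks directly gives distance $\rho \ge 1/|c|$).

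The key computation is then just the chord length: two adjacent roots $\rho e^{\mathrm{i}\theta}$ and $\rho e^{\mathrm{i}(\theta + 2\pi/n)}$ are at distance
\[
\rho\,\bigl|1 - e^{2\pi \mathrm{i}/n}\bigr| = \rho\sqrt{2\bigl(1 - \cos\tfrac{2\pi}{n}\bigr)}.
\]
To finish, I would lower-bound $\rho = |b/c|^{1/n}$. Since $b$ is a nonzero integer, $|b| \ge 1$, so $|b/c|^{1/n} \ge |c|^{-1/n} \ge 1/|c|$ (using $|c| \ge 1$, hence $|c|^{1/n} \le |c|$). Plugging this in gives the stated bound. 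One should also double-check the edge case $n=1$ separately: then there is one nonzero root of modulus $\rho \ge 1/|c|$ and possibly the root $0$, and the two are at distance $\rho \ge 1/|c|$, which is at least the right-hand side since $\sqrt{2(1-\cos 2\pi)} = 0$ makes the claimed bound vacuous there anyway (so nothing to prove).

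I do not expect any real obstacle here: the statement is essentially the observation that roots of a binomial form a single orbit of equally spaced points on a circle, together with the trivial integrality bound $|b| \ge 1$. The only mild subtlety is making sure the root at the origin (when $\beta > 0$) does not produce a smaller distance than the adjacent-chord distance, which follows from the elementary inequality $\rho > \rho\sqrt{2(1-\cos(2\pi/n))}$ for $n \ge 2$ — i.e. a radius is longer than any chord — so it can be dispatched in one line.
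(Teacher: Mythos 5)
Your approach is the same as the paper's (nonzero roots equally spaced on a circle of radius $\rho=|b/c|^{1/(\gamma-\beta)}$, chord-length computation, and $\rho\ge 1/|c|$ from integrality), and all of that part is fine. The gap is in your one-line dismissal of the root at the origin. The inequality you invoke, $\rho\ge\rho\sqrt{2(1-\cos(2\pi/n))}$ for all $n\ge 2$ (``a radius is longer than any chord''), is false: a chord can be as long as the diameter, and $\sqrt{2(1-\cos(2\pi/n))}\le 1$ holds only for $n\ge 6$. For $n\in\{2,3,4,5\}$ adjacent roots are \emph{farther} apart than the radius, so when $\beta>0$ the closest pair can be $(0,\text{nonzero root})$ at distance $\rho$, and $\rho$ need not meet the stated bound. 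Concretely, $f(x)=x+3x^3$ has roots $0$ and $\pm i/\sqrt{3}$; the distance from $0$ to $i/\sqrt{3}$ is $1/\sqrt{3}\approx 0.577$, while the right-hand side of the Proposition is $\tfrac{1}{3}\sqrt{2(1-\cos\pi)}=2/3\approx 0.667$. So the step fails, and in fact this is a counterexample to the Proposition as literally stated.

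To be fair, the paper's own proof does not really close this case either: it proves the two bounds separately, namely $|x_1-x_2|\ge 1/|c|$ when one root is $0$ and $|x_1-x_2|\ge\frac{1}{|c|}\sqrt{2(1-\cos\frac{2\pi}{\gamma-\beta})}$ when both are nonzero, and what is actually established (and all that is needed for the application in Theorem~\ref{sepc}) is the minimum of these two quantities. The honest fix for your write-up is the same: either restrict the statement to two nonzero roots, or state the conclusion as
$$|x_1-x_2|\ \ge\ \frac{1}{|c|}\min\left(1,\ \sqrt{2\left(1-\cos \frac{2\pi}{\gamma-\beta}\right)}\right),$$
and note that your radius-versus-chord comparison is only valid for $\gamma-\beta\ge 6$. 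Everything else in your argument (the factorization, the chord length, the bound $\rho\ge 1/|c|$, and the vacuous case $\gamma-\beta=1$) is correct.
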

\begin{proof}
  Set $\varepsilon=1$ if $b$ and $c$ are of opposite signs, $\varepsilon=-1$ if they are
  of the same sign.
  The roots of $f$ are the $\gamma-\beta$ numbers of the form
  $x=|b/c|^{1/(\gamma-\beta)} \xi$, where $\xi^{\gamma-\beta}=\varepsilon$.
  In addition, 0 is a root of multiplicity $\beta$ if $\beta \geq 1$.
  The distance between the origin and any other root is
  $|b/c|^{1/{(\gamma-\beta)}} \geq 1/|c|.$
  The distance between any two other roots is at least
  $$|1-e^{2i\pi/(\gamma-\beta)}|/|c| =
  \frac{1}{|c|}\sqrt{2\left(1-\cos \frac{2\pi}{\gamma-\beta}\right)}.$$
  \end{proof}

\subsection{Squarefree trinomials}

In order to extend Theorem~\ref{main} to the complex domain,
a suitable adapatation of Rolle's theorem is needed.
The following one is due to Grace and Heawood (see for instance Theorem~5.1
in Marden's survey~\cite{marden85}; for a proof, see Theorem~23.1 in his book~\cite{marden66} or Theorem~4.3.1 in~\cite{rahman2002}).
\begin{theorem} \label{rollec}
If $z_1$ and $z_2$ are two distinct roots of a polynomial $f \in \cc[X]$
  of degree $n$, then at least one zero of $f'$ lies in the disk of center
  $M=(z_1+z_2)/2$ and radius
  $$r=\frac{|z_1 - z_2|}{2}\cot(\pi/n).$$
\end{theorem}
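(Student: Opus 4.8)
The plan is to normalize the configuration by an affine change of variable and then invoke Grace's apolarity theorem. Put $M=(z_1+z_2)/2$ and $\rho=(z_2-z_1)/2$, and apply the substitution $z\mapsto M+\rho z$: it sends $\mp 1$ to $z_1,z_2$, replaces $f$ by a polynomial $g$ of degree $n$ with $g(1)=g(-1)=0$, multiplies $f'$ by the nonzero constant $\rho$ (so that the zeros of $g'$ are exactly the preimages of the zeros of $f'$), and maps the disk $\{|z|\le\cot(\pi/n)\}$ onto $\overline{D}(M,r)$. Hence it suffices to prove the statement in the normalized case: assume $z_1=-1$, $z_2=1$, and produce a zero of $f'$ with $|z|\le\cot(\pi/n)$.

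Introduce the auxiliary polynomial
\[
\phi(z)=\frac{(z+1)^n-(z-1)^n}{2n}.
\]
One checks that $\phi$ has degree $n-1$ with leading coefficient $1$, and that its zeros are the $n-1$ numbers $-i\cot(k\pi/n)$, $k=1,\dots,n-1$ (solve $\big((z+1)/(z-1)\big)^n=1$). Since $|\cot x|$ on $(0,\pi)$ is largest near the endpoints, the maximum of $|\cot(k\pi/n)|$ over $k=1,\dots,n-1$ is attained at $k=1$ and $k=n-1$ and equals $\cot(\pi/n)$; thus all zeros of $\phi$ lie in $\overline{D}(0,\cot(\pi/n))$, with $\pm i\cot(\pi/n)$ on its boundary.

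The crux is to show that $f'$ and $\phi$, both viewed as polynomials of (formal and actual) degree $n-1$, are apolar in the sense of Grace. Writing $f(z)=\sum_{k=0}^n a_k z^k$ and expanding $f'$ and $\phi$ in the binomial basis, one evaluates the apolarity invariant; using the identity $\binom{n-1}{j}/(j+1)=\binom{n}{j+1}/n$, the sum collapses to $\sum_{k\ \mathrm{odd}}a_k=\tfrac12\big(f(1)-f(-1)\big)$, which vanishes by the normalization. Grace's theorem then applies: every circular region, in particular every closed disk, containing all zeros of $\phi$ must contain at least one zero of $f'$. Applying this to $\overline{D}(0,\cot(\pi/n))$ yields a zero $\zeta_0$ of $f'$ in that disk, and $M+\rho\,\zeta_0$ is the desired zero of the original $f'$ in $\overline{D}(M,r)$. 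When $n=2$ the disk degenerates to $\{M\}$, matching the fact that the derivative of a quadratic vanishes exactly at the midpoint of its roots, so the radius is already sharp there; the rank-one polynomials with $f'(z)=c(z\mp i\cot(\pi/n))^{n-1}$ show sharpness for general $n$.

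I expect the apolarity verification to be the one genuinely delicate point: the binomial weights and formal-degree conventions in Grace's pairing are easy to mishandle, and one must be sure that both $\phi$ and $f'$ really have degree $n-1$ (this is exactly where the hypothesis $\deg f=n$ enters) so that Grace's theorem applies without degenerate-degree caveats. An alternative would be to obtain the same incidence from Walsh's coincidence theorem applied to the multilinear symmetrisation of $f'$, but the apolarity route is the shortest; and the search for the extremal $\phi$ — the degree-$(n-1)$ polynomial with zeros as far from the origin as possible that is apolar to every $f'$ arising from an $f$ with $f(\pm 1)=0$ — is precisely what forces the constant $\cot(\pi/n)$.
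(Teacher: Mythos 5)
Your proof is correct. The paper does not actually prove this statement: it quotes it as the classical Grace--Heawood theorem and points to Marden (Theorem~23.1 of his book) and Rahman--Schmeisser (Theorem~4.3.1) for proofs, and your argument --- normalize to $z_1=-1$, $z_2=1$, show that $f'$ is apolar to $\phi(z)=\bigl((z+1)^n-(z-1)^n\bigr)/(2n)$, whose zeros are $-i\cot(k\pi/n)$ for $k=1,\dots,n-1$, and invoke Grace's theorem for the closed disk $|z|\le\cot(\pi/n)$ --- is precisely the standard proof given in those references. The delicate step checks out: writing $f=\sum_k a_kz^k$, the Grace pairing of $f'$ and $\phi$ (both of exact degree $n-1$, which is where $\deg f=n$ is used) collapses via $\binom{n-1}{j}/(j+1)=\binom{n}{j+1}/n$ to $\sum_{k\ \mathrm{odd}}a_k=\tfrac12\bigl(f(1)-f(-1)\bigr)=0$, so the apolarity holds and the conclusion follows.
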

Here is an analogue of Lemma~\ref{fofm}:
\begin{lemma} \label{fofmc}
  Consider a trinomial $f(x)=a+bx^{\beta}+cx^{\gamma} \in \zz[X]$
  where $\gamma > \beta$. If $f'(m)=0$ at some point $m \in \cc^*$
  then $m= |b\beta / c\gamma|^{1/(\gamma-\beta)} \xi$ where
$\xi^{\gamma-\beta}=\pm 1$.
Moreover,
$$f(m)=a+b\left(1-\frac{\beta}{\gamma}\right) m^{\beta} =
a+b\left(1-\frac{\beta}{\gamma}\right)\left| \frac{b\beta}{c\gamma} \right|^{\beta/(\gamma-\beta)}\xi^{\beta}.$$
\end{lemma}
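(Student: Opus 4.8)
The plan is to repeat, essentially verbatim, the routine computation behind Lemma~\ref{fofm}, the only difference being that over $\cc^{*}$ the relevant power equation has several solutions and we must record which ones occur. Since $m\neq 0$, we may divide $f'(m)=b\beta m^{\beta-1}+c\gamma m^{\gamma-1}=0$ by $m^{\beta-1}$ to obtain $m^{\gamma-\beta}=-b\beta/(c\gamma)$. As $b,c$ are nonzero integers and $\beta,\gamma$ are positive integers, the right-hand side is a nonzero real number; in particular, $b$ and $c$ must indeed be nonzero for such an $m$ to exist. Taking moduli gives $|m|=|b\beta/(c\gamma)|^{1/(\gamma-\beta)}$, so we may write $m=|b\beta/(c\gamma)|^{1/(\gamma-\beta)}\,\xi$ with $|\xi|=1$. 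Dividing $m^{\gamma-\beta}=-b\beta/(c\gamma)$ by $|m|^{\gamma-\beta}=|b\beta/(c\gamma)|$ then yields $\xi^{\gamma-\beta}=-\operatorname{sign}(bc)\in\{+1,-1\}$, which is the first assertion of the lemma.

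For the value of $f$ at such a point, I would substitute the identity $m^{\gamma}=m^{\beta}\cdot m^{\gamma-\beta}=-\bigl(b\beta/(c\gamma)\bigr)m^{\beta}$ into $f(m)=a+bm^{\beta}+cm^{\gamma}$. The last two terms then combine to $bm^{\beta}-\bigl(b\beta/\gamma\bigr)m^{\beta}=b\bigl(1-\beta/\gamma\bigr)m^{\beta}$, giving $f(m)=a+b\bigl(1-\beta/\gamma\bigr)m^{\beta}$. Finally, raising the expression $m=|b\beta/(c\gamma)|^{1/(\gamma-\beta)}\xi$ to the power $\beta$ produces $m^{\beta}=|b\beta/(c\gamma)|^{\beta/(\gamma-\beta)}\xi^{\beta}$, and plugging this in gives the closed form stated in the lemma.

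There is no genuine obstacle here: this is the same elementary calculation as in Lemma~\ref{fofm}. The only new feature is that the equation $m^{\gamma-\beta}=-b\beta/(c\gamma)$ now has $\gamma-\beta$ solutions in $\cc^{*}$ rather than a single positive one, which is precisely why the root of unity $\xi$ (satisfying $\xi^{\gamma-\beta}=\pm1$) appears in the statement and why the factor $\xi^{\beta}$ survives in the formula for $f(m)$; it is worth noting explicitly that all these candidate points have the same modulus, so that no additional case analysis is needed. As in the real case, the hypothesis that $b$ and $c$ are nonzero is forced by the existence of a nonzero root of $f'$.
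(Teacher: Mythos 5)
Your computation is correct and is exactly the routine calculation the paper has in mind (the paper gives no explicit proof of Lemma~\ref{fofmc}, treating it as the direct complex analogue of Lemma~\ref{fofm}). The derivation of $m^{\gamma-\beta}=-b\beta/(c\gamma)$, the identification $\xi^{\gamma-\beta}=-\operatorname{sign}(bc)=\pm 1$, and the substitution yielding $f(m)=a+b(1-\beta/\gamma)m^{\beta}$ all match the intended argument.
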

The complex analogue of Lemma~\ref{min} is:
\begin{lemma} \label{minc}
Consider a trinomial $f(x)=a+bx^{\beta}+cx^{\gamma} \in \zz[X]$
where $\gamma > \beta$, $a\neq 0$ and
$\log \max(|a|,|b|,|c|,\beta,\gamma) \leq s$ for some $s \geq 1$.
If $f'(m)=0$ at some point $m \in \cc$ where $f(m) \neq 0$ then $|f(m)| \geq \exp(-C_3s^3)$.
\end{lemma}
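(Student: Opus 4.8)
The plan is to mirror the proof of Lemma~\ref{min} almost verbatim, replacing the real evaluation point by a complex one. By Lemma~\ref{fofmc}, whenever $f'(m)=0$ at $m\in\cc^*$ we have the explicit formula
$$f(m)=a+b\left(1-\frac{\beta}{\gamma}\right)\left|\frac{b\beta}{c\gamma}\right|^{\beta/(\gamma-\beta)}\xi^{\beta},$$
where $\xi^{\gamma-\beta}=\pm1$; so $f(m)$ is the sum of the integer $a$ and a complex number whose \emph{modulus} is exactly $|b(1-\beta/\gamma)|\cdot|b\beta/c\gamma|^{\beta/(\gamma-\beta)}$, i.e. precisely the quantity $|g(p/q)|$ studied in Proposition~\ref{binomial} after clearing denominators. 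Concretely, write $\beta=\beta_1/\beta_2$ in lowest terms, so that $|b\beta/c\gamma|^{\beta/(\gamma-\beta)}$ is a $(\gamma-\beta)\beta_2$-th root of a rational number of height $\exp(O(s))$; this puts us in exactly the setting of Proposition~\ref{binomial} with $t=O(s)$ (the exponents $\beta_1,\beta_2,\gamma-\beta$ are all bounded by $e^s$, so $t\le s+\log 3$ suffices, or one absorbs the constant into $C_3$).

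The key point is that $|f(m)|$ cannot be small unless \emph{both} its real and imaginary parts are small. If $\xi^\beta$ is not real, then $\mathrm{Im}\,f(m)=\mathrm{Im}\bigl(b(1-\beta/\gamma)|b\beta/c\gamma|^{\beta/(\gamma-\beta)}\xi^\beta\bigr)$, and since $\xi^\beta=e^{i\theta}$ for some $\theta$ that is a rational multiple of $\pi$ with denominator at most $(\gamma-\beta)\le e^s$, we have $|\sin\theta|\ge \sin(\pi/e^s)\ge e^{-s}$ whenever $\sin\theta\ne0$; combined with the lower bound of Proposition~\ref{binomial} on the modulus this gives $|f(m)|\ge|\mathrm{Im}\,f(m)|\ge\exp(-C_3 s^3)$ directly, with no appeal to $a$ at all. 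If on the other hand $\xi^\beta=\pm1$, then $f(m)$ is a real number of the form $a\pm|b(1-\beta/\gamma)|\cdot|b\beta/c\gamma|^{\beta/(\gamma-\beta)}$ (up to sign), which is — after clearing the denominator $c\gamma$ raised to the appropriate power, an integer of size $\exp(O(s))$ — exactly an integer-plus-binomial-value of the type bounded in Proposition~\ref{binomial}; applying that proposition (with $a_2$ incorporating the term $a\,(c\gamma)^{\cdot}$, or more simply by rerunning its convexity argument) yields $|f(m)|\ge\exp(-C_3 s^3)$. Here the hypothesis $a\ne0$ is used, just as in Lemma~\ref{min}, to guarantee the relevant linear form in logarithms is nonzero, or rather to keep the "integer part" away from a degenerate cancellation; in fact the nonvanishing $f(m)\ne0$ is assumed outright.

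The main obstacle — such as it is — is bookkeeping: making sure that after writing $\xi^\beta=e^{i\theta}$ one correctly identifies $\theta$ as a rational multiple of $\pi$ with controlled denominator (it is $2\pi k\beta/(\gamma-\beta)$ or $\pi(2k+1)\beta/(\gamma-\beta)$ for integer $k$, so the denominator divides $2(\gamma-\beta)\beta_2$, which is at most $2e^{2s}$), and that the real case reduces cleanly to Proposition~\ref{binomial} despite the extra additive integer $a$. Neither difficulty is serious: in the real subcase one simply notes that $f(m)$ equals $a$ plus (a signed) $g(p/q)$ where $g$ is a binomial as in Proposition~\ref{binomial}, and the convexity argument in the proof of that proposition handles an arbitrary nonzero integer offset with no change (indeed one may as well set $a_2' = a_2 - a\cdot(\text{denominator})$ and rerun it). Thus the statement follows, with $C_3$ absorbing the constant $C_1$ from Proposition~\ref{binomial}, the extra $+O(s)$ coming from clearing denominators, and the $\log(1/\sin(\pi/e^s))=O(s)$ loss in the imaginary-part case.
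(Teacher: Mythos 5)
Your real subcase ($\xi^{\beta}=\pm1$) is fine and matches the paper's reduction to Proposition~\ref{binomial} via Lemma~\ref{min}. The gap is in the non-real subcase. You assert that the modulus of the non-constant term of $f(m)$, namely $|r|=|b(1-\beta/\gamma)|\cdot|b\beta/c\gamma|^{\beta/(\gamma-\beta)}$, is ``precisely the quantity $|g(p/q)|$ studied in Proposition~\ref{binomial}'' and hence admits a lower bound $\exp(-O(s^3))$, so that $|f(m)|\geq|\mathrm{Im}\,f(m)|=|r\sin\theta|\geq\exp(-C_3s^3)$ ``with no appeal to $a$ at all.'' This is a misidentification: Proposition~\ref{binomial} lower-bounds a nonzero \emph{sum} $a_1(p/q)^{\beta}+a_2$ with $a_2$ a nonzero integer --- the additive cancellation is the whole reason Baker's theorem is needed --- and it says nothing about the pure product $|a_1|(p/q)^{\beta}$. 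Indeed no bound of the form $\exp(-\mathrm{poly}(s))$ holds for $|r|$: take $b=1$, $c=2$, $\beta=\gamma-1$, so that $|m|=|b\beta/c\gamma|\approx 1/2$ and $|r|\approx\gamma^{-1}2^{-(\gamma-1)}$, which for $\gamma\approx e^{s}$ is doubly exponentially small in $s$. In that regime $|\mathrm{Im}\,f(m)|$ is essentially zero and your chain of inequalities collapses; the lemma survives only because $|f(m)|\geq|a|-|r|\geq 1/2$ there, i.e.\ one must appeal to $a$ after all.

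The paper instead computes $|f(m)|^2=a^2+r^2+2ar\cos\theta\geq(a\pm r)^2$ (sign according to whether $a$ and $r$ have opposite or equal signs) and applies Lemma~\ref{min} to the \emph{sum} $a\pm r$ --- which is exactly the integer-plus-binomial quantity that Proposition~\ref{binomial} controls --- reserving an elementary estimate $1-|\cos\theta|\geq c/\gamma^{2}$ for the degenerate case $a\pm r=0$. Your imaginary-part idea can be repaired into a correct (and in fact Baker-free) argument for the non-real case by a case split: if $|r|\leq|a|/2$ then $|\mathrm{Re}\,f(m)|\geq|a|-|r|\geq 1/2$, while if $|r|>|a|/2\geq 1/2$ then $|\mathrm{Im}\,f(m)|\geq\tfrac12\sin(\pi/(\gamma-\beta))\geq e^{-O(s)}$. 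But as written, the invocation of Proposition~\ref{binomial} to lower-bound $|r|$ is not valid, and that step is load-bearing in your proof.
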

\begin{proof}
  At $m=0$ we have $|f(m)| = |a| \geq 1$. It therefore remains to 
  study the case $m \neq 0$.
  Consider the complex number $\xi$ from Lemma~\ref{fofmc}.
  If $\xi^{\beta}$ is real then $\xi^{\beta} = \pm 1$ and the result follows
  from Proposition~\ref{binomial} just like in Lemma~\ref{min}.

  If $\xi^{\beta} {\not \in} \rr$,
  let us write $\xi^{\beta} = \cos \theta + i \sin \theta$ with $\sin \theta \neq 0$. Define also: $$r=b\left(1-\frac{\beta}{\gamma}\right)\left| \frac{b\beta}{c\gamma} \right|^{\beta/(\gamma-\beta)}.$$
  By Lemma~\ref{fofmc} we have $f(m)=a+r(\cos \theta + i \sin \theta)$ so that
  $$|f(m)|^2 = (a + r cos \theta)^2 + (r \sin \theta)^2=a^2+r^2+2ar\cos \theta.$$
  Consider now the case where $a$ and $r$ are of opposite signs.
  We have  $|f(m)|^2 = (a+r)^2 + 2ar(\cos \theta - 1) \ \geq (a+r)^2$.
  If $a+r \neq 0$ then $|a+r| \geq \exp(-C_2 s^3)$ by Lemma~\ref{min} and we are done.
  
  If $a+r=0$ then $|f(m)|^2 = 2a^2(1-\cos \theta) \geq 2(1-\cos \theta)$.
  Recall from Lemma~\ref{fofmc}
  that $\xi^{2(\gamma-\beta)} = 1$. Hence we may assume that $\theta = 2k\pi/n$
  where $n=2(\gamma-\beta)$ and $k \in \{1,\ldots,2n-1\}$ ($k=0$ is impossible
  due to the assumption $\sin \theta \neq 0$).
  To conclude, we note that $1 - \cos \theta \geq 1- \cos (2\pi/n)
  \geq 1 - \cos (\pi/\gamma)$.
  For large values of $\gamma$ we have
  $\cos (\pi/\gamma) = 1 - \pi^2/(2\gamma^2) +o(1/\gamma^2)$.
  Hence there is a constant $\gamma_0$ such that $\cos (\pi/\gamma)< 1 - 1/\gamma^2$ for all $\gamma \geq \gamma_0$.  Under this condition we have $1-\cos \theta \geq 1 /\gamma^2 \geq e^{-2s}$ and $|f(m)| \geq e^{-s}$. For $\gamma < \gamma_0$ we have $|f(m)|^2 \geq 2(1-\cos \theta) \geq 2[1 - \cos (\pi/\gamma)] > 2[1 - \cos (\pi/\gamma_0)]$. Since $\gamma_0$ is an absolute constant, there is a constant
  $c \geq 1$ such that the bound $|f(m)| \geq e^{-cs}$ applies to both cases
  $\gamma \geq \gamma_0$ and $\gamma < \gamma_0$.

  The case where $a$ and $r$ are of the same sign is handled by a similar argument, noting that this time we have:
  $$|f(m)|^2 = (a-r)^2 + 2ar(1 + \cos \theta ) \ \geq (a-r)^2.$$
\end{proof}
We are now ready to obtain a separation bound in the complex domain.
As a first step, we consider only the case where $f$ is squarefree.
\begin{proposition} \label{squarefree}
Consider a nonzero squarefree trinomial $f(x)=ax^{\alpha}+bx^{\beta}+cx^{\gamma} \in \zz[X]$
where $\alpha < \beta < \gamma$ and 
$\log \max(|a|,|b|,|c|,\alpha,\beta,\gamma) \leq s$ for some $s \geq 1$.
For any two distinct complex roots $x_1,x_2$ of $f$ we have  $|x_1 - x_2| \geq \exp(-C_4 s^3)$.
\end{proposition}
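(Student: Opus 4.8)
The plan is to mimic the real-variable argument of Theorem~\ref{main}, replacing Rolle's theorem by the Grace--Heawood theorem (Theorem~\ref{rollec}) and Lemma~\ref{min} by its complex analogue Lemma~\ref{minc}, while handling the resulting geometric loss from $\cot(\pi/n)$ carefully. First I would dispose of degenerate configurations exactly as in the real case: if one of $x_1,x_2$ is $0$, or if $f$ has a nonzero constant term after factoring out $x^\alpha$, Corollary~\ref{smallroots} gives $|x_1|,|x_2|\geq 1/(1+e^s)$, so it suffices to separate two nonzero roots. As before, we may assume $\alpha=0$ (divide by $x^\alpha$), and then $a\neq 0$ since otherwise $f$ is a binomial, a case covered by Proposition~\ref{binsep}. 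So we are reduced to $f(x)=a+bx^\beta+cx^\gamma$ with $a,b,c$ all nonzero and two nonzero complex roots $x_1\neq x_2$.

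Next I would apply Theorem~\ref{rollec} to get a zero $m$ of $f'$ inside the disk $D$ of center $M=(x_1+x_2)/2$ and radius $r=\tfrac12|x_1-x_2|\cot(\pi/n)$, where $n=\deg f=\gamma$. Since $f$ is squarefree and a trinomial, $f$ and $f'$ share no root, so $f(m)\neq 0$, and Lemma~\ref{minc} gives $|f(m)|\geq\exp(-C_3 s^3)$. On the other hand, I want to bound $|f(m)|$ from above in terms of $|x_1-x_2|$: writing $f(m)=f(m)-f(x_1)=\int_{x_1}^{m} f'(t)\,dt$ along the segment $[x_1,m]$ (both points lie in or on the boundary of $D$, which is convex), we get $|f(m)|\leq |m-x_1|\cdot\sup_{t\in[x_1,m]}|f'(t)|$. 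Now $|m-x_1|\leq |m-M|+|M-x_1|\leq r + \tfrac12|x_1-x_2| = \tfrac12|x_1-x_2|(1+\cot(\pi/n))$. Combining, $\exp(-C_3 s^3)\leq\tfrac12|x_1-x_2|(1+\cot(\pi/n))\cdot\sup_{t\in[x_1,m]}|f'|$, which rearranges to a lower bound on $|x_1-x_2|$ provided I can control the cotangent factor and $\sup|f'|$.

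The cotangent factor is benign: $\cot(\pi/\gamma)\leq \gamma/\pi \leq e^s$, so $1+\cot(\pi/n)\leq 2e^s$ costs only a factor $e^s$, harmless against $\exp(-C_3 s^3)$. The genuine work, and I expect the main obstacle, is bounding $\sup|f'|$ on the relevant region --- the real-case Lemma~\ref{f'} used the ordering $0\le x \le m$ and the monotone behavior of $x^{\beta-1},x^{\gamma-1}$ on a real interval, which is unavailable here. The fix should parallel case~(ii) of the real proof: first establish the bound when $\gamma\geq 2\beta$ by controlling $|f'(t)|=|b\beta t^{\beta-1}+c\gamma t^{\gamma-1}|$ for $|t|$ bounded by $\max(|x_1|,|x_2|)+2r$ (all points of $[x_1,m]$ have modulus at most this), using $|t|\leq$ (some explicit quantity governed by Cauchy's bound, hence $\leq e^{O(s)}$) together with the crude estimate $|b\beta/c\gamma|^{(\beta-1)/(\gamma-\beta)}\leq|b|$ coming from $\gamma\ge 2\beta$; this yields $\sup|f'|\leq e^{O(s)}$-ish, in fact polynomially bounded in $|b|,|c|,\gamma$ after a short computation. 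When $\gamma<2\beta$, pass to the reciprocal polynomial $f^R(x)=x^\gamma f(1/x)=c+bx^{\gamma-\beta}+ax^\gamma$, whose top-to-next exponent ratio $\gamma/(\gamma-\beta)\geq 2$ puts it in the first case; since $1/x_1,1/x_2$ are its roots and we already know $|x_i|\geq 1/(1+e^s)$ hence $|1/x_i|\leq 1+e^s$, the identity $|x_1-x_2|=|x_1||x_2|\,|1/x_1-1/x_2|$ transfers the bound back, losing only another factor $e^{O(s)}$. Putting the two cases together with the cotangent estimate and Lemma~\ref{minc} gives $|x_1-x_2|\geq\exp(-C_4 s^3)$ for a suitable absolute constant $C_4>C_3$. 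The one point needing care throughout is that $m$ may a priori have large modulus if it lies near the boundary of a large disk $D$; but $D$'s radius is only $\tfrac12|x_1-x_2|\cot(\pi/\gamma)$, and since $|x_1-x_2|$ is itself at most $2\cdot(\text{Cauchy bound})=e^{O(s)}$, the disk --- and hence $|m|$ and every point of $[x_1,m]$ --- stays within modulus $e^{O(s)}$, so the crude power bounds on $|f'|$ remain polynomial in $e^s$.
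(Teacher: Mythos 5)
Your overall architecture matches the paper's: reduce to $\alpha=0$ and nonzero roots, invoke Grace--Heawood to place a zero $m$ of $f'$ in the disk of radius $\tfrac12|x_1-x_2|\cot(\pi/\gamma)$, use squarefreeness to guarantee $f(m)\neq 0$ so that Lemma~\ref{minc} lower-bounds $|f(m)|$, then combine the mean-value estimate on the segment $[x_1,m]$ with the triangle inequality, splitting on $\gamma\geq 2\beta$ versus passing to the reciprocal polynomial. The cotangent bookkeeping is also correct.

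The gap is in the step you yourself flag as ``the genuine work'': bounding $\sup_{z\in[x_1,m]}|f'(z)|$. Your proposed fix --- every point $t$ of the segment satisfies $|t|\leq e^{O(s)}$, hence $|f'(t)|=|b\beta t^{\beta-1}+c\gamma t^{\gamma-1}|$ is ``polynomially bounded in $|b|,|c|,\gamma$'' --- does not work. The exponent $\gamma-1$ can itself be of order $e^{s}$, so already $|t|=2$ gives $|t|^{\gamma-1}=2^{\gamma-1}$, doubly exponential in $s$; a modulus bound of the form $|t|\leq e^{O(s)}$ is therefore useless on its own. The estimate $|b\beta/c\gamma|^{(\beta-1)/(\gamma-\beta)}\leq|b|$ controls $|f'|$ only at the balance point $|t|=|m|=|b\beta/c\gamma|^{1/(\gamma-\beta)}$, where the two monomials of $f'$ have equal modulus; it says nothing about $t$ with $|t|$ appreciably larger than $|m|$. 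The paper closes this with a dichotomy that is missing from your argument: either $|x_1|\leq(1+1/\gamma)|m|$, in which case every $z$ on the segment satisfies $|z|\leq(1+1/\gamma)|m|$ and the factor $(1+1/\gamma)^{\gamma}\leq e$ transfers the balance-point bound to the whole segment, giving $|f'(z)|\leq 2e\,b^2\beta$; or $|x_1|\geq(1+1/\gamma)|m|$, in which case one does not bound $|f'|$ at all but observes directly that $|m-x_1|\geq|x_1|-|m|\geq|m|/\gamma\geq\exp(-3s)$, which already feeds into the triangle inequality to yield $|x_1-x_2|\geq\exp(-C_5 s)$. (There is also the small separate case $m=0$, disposed of by the lower bound on $|x_1|$ from Corollary~\ref{smallroots}.) Without some such case distinction, your ``short computation'' cannot produce a singly-exponential bound on $\sup|f'|$ over the segment.
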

\begin{proof}
  If one of the two roots is equal to 0, the argument used in the proof
  of Theorem~\ref{main} (namely, Corollary~\ref{smallroots}) still applies.
  We will therefore assume that $x_1$ and $x_2$ are nonzero.
  Again, we assume that $\alpha=0$ by factoring out $x^{\alpha}$
  if necessary.\footnote{We already have $\alpha \leq 1$
    before the factorization since $f$
  is assumed to be squarefree.}
  Then we appeal to the complex version of Rolle's theorem:
  by Theorem~\ref{rollec} there is a point such that $f'(m)=0$
  in the disk of center
  $M=(x_1+x_2)/2$ and radius
  \begin{equation} \label{req}
   r=\frac{|x_1 - x_2|}{2}\cot(\pi/\gamma).
    \end{equation}
  Since $f$ is squarefree, $f(m) \neq 0$ so that $|f(m)| \geq \exp(-C_3s^3)$
  by Lemma~\ref{minc}.
  Let us denote by $I=[m,x_1]$ the line segment joining $m$ to $x_1$
  in the complex plane.
  Like in the real case,
  we have $|f(x_1)-f(m)| \leq |m-x_1|\sup_{z \in I}|f'(z)|$ so that
  \begin{equation} \label{accroissements}
|m - x_1| \geq\frac{ |f(m)|}{ \sup_{z \in I}|f'(z)|}.
\end{equation}
  This lower bound on $|m - x_1|$ is 
  useful since:
  $$\frac{|x_1 - x_2|}{2}\cot(\pi/\gamma) \geq |m-M| \geq |m-x_1|-|x_1-M|
  =|m-x_1|-|x_1-x_2|/2.$$
  The first inequality holds true because $m$ belongs to the circle
  of center $M$ and radius given by~(\ref{req}).
  As a result, we have
  \begin{equation} \label{triangle}
    |x_1-x_2| \geq \frac{2|m-x_1|}{1+\cot(\pi/\gamma)}.
    \end{equation}
  If $m=0$,  the lower bound on $|x_1|$ from Corollary~\ref{smallroots} recalled at the beginning of the proof applies to the the numerator of~(\ref{triangle}). As to the denominator,
  note that $\cot(\pi/\gamma) = \gamma / \pi+o(\gamma)$ for large $\gamma$.
  As a result there is a constant $A>0$ such that
  $1+\cot(\pi/\gamma) \leq A\gamma$ for all $\gamma \geq 1$.
 We therefore have 
 a lower bound of the form  $|x_1 - x_2| \geq \exp(-C_5s)$
 in the case $m=0$.
 This is  better than the lower bound claimed in the statement of
 Proposition~\ref{squarefree},
so we'll assume that $m \neq 0$ in the remainder of the proof.

  Our next goal is to bound $\sup_{z \in I}|f'(z)|$ in order to apply~(\ref{accroissements}) and~(\ref{triangle}).
  Note that for $z \in I$ we have $|z| \leq \max(|m|,|x_1|)$.
  We distinguish two cases:
  \begin{itemize}
  \item[(i)]   If $|x_1| \leq (1+1/\gamma)|m|$
    then $|z| \leq (1+1/\gamma)|m|$ for all $z \in I$.
We have $|f'(z)| \leq |b \beta||z|^{\beta-1}+|c\gamma||z|^{\gamma-1}.$
Since $(1+1/\gamma)^{\gamma} \leq e$ we have for any $z \in I$:
$$|f'(z)| \leq e(|b \beta||m|^{\beta-1}+|c\gamma||m|^{\gamma-1})=2e|b \beta||m|^{\beta-1}.$$
From the explicit  value
    $|m| = |b\beta / c\gamma|^{1/(\gamma-\beta)}$ computed in Lemma~\ref{fofmc}
we have
\begin{equation} \label{deriv}
|f'(z)| \leq 2e|b \beta||b\beta / c\gamma|^{(\beta-1)/(\gamma-\beta)}.
\end{equation}
At this point, like in the proof of Theorem~\ref{main}
we need to throw in the additional assumption $\gamma \geq 2 \beta$.
Under this assumption, from~(\ref{deriv}) we have $|f'(z)| \leq 2eb^2 \beta$.
Combining~(\ref{accroissements}) and~(\ref{triangle}) we have:
$$ |x_1-x_2| \geq \frac{\exp(-C_3s^3)}{eb^2 \beta(1+\cot(\pi/\gamma))},$$
which yields the desired bound.
If $\gamma < 2\beta$, we work instead with the reciprocal polynomial
$f^R$ instead of $f$ like in the proof of Theorem~\ref{main}.
Note that $f^R$ is squarefree since~$f$ is.

  \item[(ii)] If $|x_1| \geq (1+1/\gamma)|m|$ then $|m-x_1| \geq  |m|/\gamma$.
        Since $|m| = |b\beta / c\gamma|^{1/(\gamma-\beta)}$
        we have $|m| \geq \exp(-2s)$. 
    Inequality (\ref{triangle}) then yields a lower
    bound of the form $|x_1 - x_2| \geq \exp(-C_5s)$, and we are done.
    \end{itemize}
\end{proof}

\subsection{The general case}

In order to obtain our final result on root separation, it remains to get rid
of the assumption that $f$ is squarefree. For this, we shall use a version
of Theorem~\ref{rollec} that can take multiple zeros into account.
The following result is due to Marden~(\cite{marden66}, Theorem~25.1).

\begin{theorem} \label{marden}
 Let  $f \in \cc[X]$ be a polynomial of degree $n$.
 If $z_1$ and $z_2$ are two distinct roots of $f$ of respective multiplicities $k_1$ and $k_2$, then at least one zero $p$ (different from $z_1$ and $z_2$)
 of $f'$ lies in the disk of center
  $M=(z_1+z_2)/2$ and radius
  $$\rho=\frac{|z_1 - z_2|}{2}\cot(\pi/2q)$$
  where $q=n+1-k_1-k_2$.
\end{theorem}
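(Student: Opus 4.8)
The plan is to first divide out the critical points of $f$ that are \emph{forced} to coincide with $z_1$ and $z_2$, and then to locate a genuinely new one. Since $z_1$ and $z_2$ are roots of $f$ of multiplicities exactly $k_1$ and $k_2$, they are roots of $f'$ of multiplicities exactly $k_1-1$ and $k_2-1$, so the derivative factors as
$$f'(z)=(z-z_1)^{k_1-1}(z-z_2)^{k_2-1}h(z),$$
where $h\in\cc[X]$ has degree $q=n+1-k_1-k_2$ (and $q\geq 1$ because $\deg f\geq k_1+k_2$), and $h(z_1)h(z_2)\neq 0$. Hence every zero of $h$ is automatically distinct from $z_1$ and $z_2$, and it suffices to produce a zero of $h$ in the disk of center $M=(z_1+z_2)/2$ and radius $\rho$.

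Next I would turn the hypothesis $f(z_1)=f(z_2)=0$ into a weighted moment identity along the segment $[z_1,z_2]$. Parametrising that segment by $z=z_1+s(z_2-z_1)$, $s\in[0,1]$, we have
$$0=f(z_2)-f(z_1)=\int_{z_1}^{z_2}f'(t)\,dt=(z_2-z_1)\int_0^1 f'\bigl(z_1+s(z_2-z_1)\bigr)\,ds;$$
substituting the factorisation of $f'$, using $z-z_1=s(z_2-z_1)$ and $z-z_2=-(1-s)(z_2-z_1)$, and cancelling the nonzero constant $(-1)^{k_2-1}(z_2-z_1)^{k_1+k_2-1}$, this becomes
$$\int_0^1 s^{k_1-1}(1-s)^{k_2-1}H(s)\,ds=0,\qquad H(s):=h\bigl(z_1+s(z_2-z_1)\bigr),$$
a polynomial identity with $\deg H=q$. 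Since $|z_1+s(z_2-z_1)-M|=|s-\tfrac12|\,|z_1-z_2|$, it is now enough to exhibit a zero $s_0$ of $H$ with $|s_0-\tfrac12|\leq\tfrac12\cot(\pi/2q)$.

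The remaining step is to read the moment identity as an apolarity relation. Writing out the Beta-type integrals $\int_0^1 s^{k_1-1}(1-s)^{k_2-1}s^k\,ds$, the identity says precisely that $H$ is apolar, in the classical sense of Grace, to an explicit polynomial $G$ of degree $q$ whose coefficients are, up to normalization, the reversed sequence of these moments (with alternating signs and binomial weights). By Grace's apolarity theorem, every circular region containing all zeros of $G$ contains a zero of $H$, so it remains to check that all $q$ zeros of $G$ lie in the disk $|s-\tfrac12|\leq\tfrac12\cot(\pi/2q)$, which translates back to the disk of center $M$ and radius $\rho$. In the simplest case $k_1=k_2=1$ one finds that $G$ is a scalar multiple of $(1-s)^{q+1}\pm s^{q+1}$, whose zeros are $\tfrac12+\tfrac{i}{2}\cot\!\bigl(\pi j/(q+1)\bigr)$, $j=1,\dots,q$, all lying on the line $\mathrm{Re}\,s=\tfrac12$; this recovers (non-optimally) the Grace--Heawood bound of Theorem~\ref{rollec}.

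The part I expect to be the real obstacle is exactly this last zero count once multiplicities are present: then $G$ is no longer a clean difference of two powers but a genuine Jacobi-type polynomial carrying lower-order terms from the weight $s^{k_1-1}(1-s)^{k_2-1}$, and one must bound \emph{all} of its complex zeros, uniformly in $k_1$ and $k_2$, so as to fit inside the stated disk of radius $\tfrac12\cot(\pi/2q)$ in the $s$-variable. Everything else — the factorisation of $f'$, the integral identity, and the appeal to Grace's theorem — is routine. This is Theorem~25.1 of~\cite{marden66}, which I would cite for the detailed computation of the constant.
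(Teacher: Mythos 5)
The paper offers no proof of this statement: it is quoted directly from Marden (\cite{marden66}, Theorem~25.1), so the only thing to measure your argument against is Marden's own. Your preparatory steps are correct and useful: the factorisation $f'(z)=(z-z_1)^{k_1-1}(z-z_2)^{k_2-1}h(z)$ with $\deg h=q\geq 1$ and $h(z_1)h(z_2)\neq 0$ (which is what guarantees $p\neq z_1,z_2$), and the reduction of $f(z_2)-f(z_1)=0$ to the weighted moment identity $\int_0^1 s^{k_1-1}(1-s)^{k_2-1}H(s)\,ds=0$. But the step you defer to Marden --- that every zero of the apolar polynomial $G$ lies in the disk $|s-\tfrac12|\leq\tfrac12\cot(\pi/(2q))$ --- is not a ``detailed computation of the constant''; it is the entire content of the theorem. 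For general $k_1,k_2$ the polynomial $G$ is a genuine Jacobi-type object whose complex zeros have no closed form, you give no mechanism for bounding them uniformly in $k_1,k_2$, and I am not aware that this route even yields the constant $\cot(\pi/(2q))$. As written, your proof therefore collapses onto the citation with which you end it, which is no more than what the paper already does.

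For what it is worth, the classical proof closes the argument without apolarity, directly from your integral identity, and this is the route you should take if you want something self-contained. Argue by contradiction: suppose every zero $p$ of $h$ lies outside the disk of center $M$ and radius $\rho=\tfrac{|z_1-z_2|}{2}\cot(\pi/(2q))$. The set of points at which the segment $[z_1,z_2]$ subtends an angle at least $\pi/q$ is a lens bounded by two circular arcs through $z_1,z_2$, and a short computation with the inscribed-angle theorem shows this lens is contained in exactly that disk (the farthest point of the lens from $M$ is at distance $\tfrac{|z_1-z_2|}{2}\cdot\tfrac{1+\cos\theta}{\sin\theta}=\tfrac{|z_1-z_2|}{2}\cot(\theta/2)$ with $\theta=\pi/q$ --- this is where the constant comes from). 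Hence each of the $q$ zeros of $h$ sees the segment under an angle strictly less than $\pi/q$, so $\arg h(t)$ varies by less than $\pi$ as $t$ runs over the segment; since $\arg\bigl[(t-z_1)^{k_1-1}(t-z_2)^{k_2-1}\bigr]$ is constant on the open segment, the values of $f'(t)$ stay in an open half-plane not containing $0$, and therefore $\int_{z_1}^{z_2}f'(t)\,dt=f(z_2)-f(z_1)$ cannot vanish --- a contradiction. Your apolarity detour replaces this two-line angle estimate with a zero-location problem for Jacobi polynomials that is harder than the theorem itself.
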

For the present paper, the main advantage of Theorem~\ref{marden} over Theorem~\ref{rollec} is not so much the role played  by the multiplicities $k_1,k_2$ than the guarantee that $p$ is distinct from $z_1$ and $z_2$. As far as we know, there is no such guarantee in Theorem~\ref{rollec}.

\begin{theorem} \label{sepc}
Consider a nonzero trinomial $f(x)=ax^{\alpha}+bx^{\beta}+cx^{\gamma} \in \zz[X]$
where $\alpha < \beta < \gamma$ and 
$\log \max(|a|,|b|,|c|,\alpha,\beta,\gamma) \leq s$ for some $s \geq 1$.
For any two distinct complex roots $x_1,x_2$ of $f$ we have  $|x_1 - x_2| \geq \exp(-C_6 s^3)$.
\end{theorem}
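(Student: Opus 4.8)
The plan is to reduce the general case to the squarefree case treated in Proposition~\ref{squarefree}. A trinomial $f$ fails to be squarefree only because of repeated roots; writing $f = g \cdot h$ where $g = \gcd(f,f')$, the radical $f/g$ has the same roots as $f$ but is squarefree. Unfortunately $f/g$ need not be a trinomial, so Proposition~\ref{squarefree} cannot be applied to it directly. Instead I would rerun the \emph{argument} of Proposition~\ref{squarefree} on $f$ itself, but replacing the Grace--Heawood theorem (Theorem~\ref{rollec}) by Marden's refinement (Theorem~\ref{marden}). The point of this substitution is exactly the feature emphasized just before the statement: Marden guarantees that the critical point $p$ produced in the disk around $M = (x_1+x_2)/2$ is \emph{distinct from $x_1$ and $x_2$}. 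This is what lets us conclude $f(p) \neq 0$ without assuming $f$ squarefree.

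Concretely, I would proceed as follows. As in the earlier proofs, dispose of the cases where one root is $0$ (Corollary~\ref{smallroots}) and normalize so that $\alpha = 0$, hence $a \neq 0$; note that when $\alpha = 0$ the trinomial $a + bx^\beta + cx^\gamma$ has $\gcd$ with its derivative of low degree, but more simply: after factoring out $x^\alpha$ the constant term is nonzero. Apply Theorem~\ref{marden} to the two distinct roots $x_1, x_2$ of $f$, with $n = \gamma$ (the degree, after normalization) and $q = \gamma + 1 - k_1 - k_2 \leq \gamma$. This yields a zero $m$ of $f'$, different from $x_1$ and $x_2$, lying in the disk of center $M$ and radius $\rho = \tfrac{|x_1-x_2|}{2}\cot(\pi/2q)$. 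Since $m \notin \{x_1,x_2\}$ and the only positive-real... rather, since $f$ is a trinomial its set of roots is finite and $m$ is not among $x_1,x_2$, but we need $f(m)\neq 0$: here I would argue that a trinomial $a+bx^\beta+cx^\gamma$ with $a\neq 0$ has the property that $f$ and $f'$ share no root (if $f(m)=f'(m)=0$ with $m\neq 0$ then from Lemma~\ref{fofmc}, $f(m) = a + b(1-\beta/\gamma)m^\beta = 0$ forces a specific modulus, and one checks this forces $f$ itself to have a multiple root, which is still possible) — so more safely I invoke that $m$ being a zero of $f'$ distinct from $x_1$ gives, via Lemma~\ref{fofmc}, the explicit value of $f(m)$, and either $f(m)=0$ (handled separately) or $f(m)\neq 0$ and Lemma~\ref{minc} gives $|f(m)| \geq \exp(-C_3 s^3)$.

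From here the rest of the Proposition~\ref{squarefree} argument carries over essentially verbatim: the mean value inequality along the segment $I = [m,x_1]$ gives $|m-x_1| \geq |f(m)|/\sup_{z\in I}|f'(z)|$; the triangle inequality $\rho \geq |m-M| \geq |m-x_1| - |x_1-x_2|/2$ gives $|x_1-x_2| \geq 2|m-x_1|/(1+\cot(\pi/2q))$, with $1+\cot(\pi/2q) \leq A'q \leq A'\gamma$ for an absolute constant $A'$; the case split on whether $|x_1| \leq (1+1/\gamma)|m|$ or not, and within the first branch the further split $\gamma \geq 2\beta$ versus $\gamma < 2\beta$ (using $f^R$, which has the same multiplicity structure as $f$), are unchanged. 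One must only handle the remaining possibility $f(m) = 0$: then $m$ is a common root of $f$ and $f'$, hence a multiple root of $f$ distinct from $x_1,x_2$; but a trinomial has at most finitely many roots and this case is either vacuous (when $f$ is squarefree) or reduces to bounding $|x_1 - x_2|$ where at least one of $x_1, x_2, m$ is determined — in fact the cleanest fix is to note that if $f(m)=0$ we simply pick a \emph{different} zero of $f'$ in the disk, or observe that $m=x_1$ or $m=x_2$ is excluded so $f$ would have three roots $x_1,x_2,m$ counted suitably, and run Marden's theorem instead on the pair for which the bound is weakest; I expect the honest treatment of the $f(m)=0$ sub-case to be the one genuinely new obstacle, and I would resolve it by showing that a trinomial $a+bx^\beta+cx^\gamma$ with $abc\neq 0$ has no common root of $f$ and $f'$ other than possibly $0$, which follows because $f'(m)=0$ pins $|m|$ to the value in Lemma~\ref{fofmc} and then $f(m)=0$ becomes an equation that, combined with $f'(m)=0$, forces $a=0$ after eliminating — a short direct computation. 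With that in hand, $f(m)\neq 0$ always, Lemma~\ref{minc} applies, and we obtain $|x_1-x_2| \geq \exp(-C_6 s^3)$ for a suitable absolute constant $C_6 > C_3$.
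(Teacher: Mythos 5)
Your overall strategy (rerun the squarefree argument, using Marden's Theorem~\ref{marden} in place of Grace--Heawood so that the critical point is distinct from $x_1$ and $x_2$) is reasonable as far as it goes, but it founders exactly where you suspect it might: the sub-case $f(m)=0$. The claim you ultimately rest on --- that a trinomial $a+bx^{\beta}+cx^{\gamma}$ with $abc\neq 0$ has no common root of $f$ and $f'$ other than possibly $0$ --- is false. Take $f(x)=x^{2}-2x+1=(x-1)^{2}$, or more generally $x^{2k}-2x^{k}+1=(x^{k}-1)^{2}$: here $a=c=1$, $b=-2$, and every $k$-th root of unity is a common zero of $f$ and $f'$. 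The two conditions $f'(m)=0$ and $f(m)=0$ do not force $a=0$; they merely pin down $m^{\gamma-\beta}=-b\beta/(c\gamma)$ and $m^{\beta}=-a\gamma/(b(\gamma-\beta))$, which are simultaneously satisfiable for suitable integer coefficients. If your claim were true, every trinomial with nonzero constant term would be squarefree and Theorem~\ref{sepc} would be an immediate corollary of Proposition~\ref{squarefree}; the whole point of the theorem is that this is not so. Note also that Marden's guarantee $m\notin\{x_1,x_2\}$ does not give $f(m)\neq 0$: in the example above, the critical point produced in the disk can perfectly well be another double root of $f$. Your fallback suggestions (pick a different zero of $f'$ in the disk; rerun Marden on another pair) are not developed enough to close this.

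The paper closes the gap with an idea absent from your proposal. If $f(m)=0$, then $m$ is a double root of $f$ (its multiplicity is exactly $2$ because the nonzero roots of the binomial $f'$ are simple), and one applies Theorem~\ref{marden} to the pair $(x_1,m)$ with multiplicities $k_1=1$, $k_2=2$ to produce a zero $p$ of $f'$ \emph{distinct from $m$}, lying in a disk of radius $\rho=\frac{|x_1-m|}{2}\cot\frac{\pi}{2(\gamma-2)}$ about $(x_1+m)/2$. Since $f'$ is a \emph{binomial}, its distinct roots are uniformly spread on a circle, and Proposition~\ref{binsep} gives the lower bound $|m-p|\geq \exp(-C_7 s)$. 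On the other hand the triangle inequality, combined with the fact that $m$ lies in the Grace--Heawood disk around $(x_1+x_2)/2$, yields $|m-p|\leq \exp(C_8 s)\,|x_1-x_2|$. Comparing the two bounds gives $|x_1-x_2|\geq\exp(-(C_7+C_8)s)$ in this sub-case, which is even stronger than the bound claimed. You would need to supply this (or some equivalent) argument for the $f(m)=0$ case before your proof can stand.
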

\begin{proof}
  Like in the proofs of Theorem~\ref{main} and Proposition~\ref{sepc}
  we'll work without loss of generality under the assumptions
  $\alpha=0$ and $\gamma \geq 2\beta$. Also, we only have to treat
  the case where the roots $x_1$ and $x_1$ are nonzero.
  Let $m$ be the root of $f'$ introduced in the proof of Proposition~\ref{sepc}.
  In the case $f(m) \neq 0$, we have obtained in that proof the lower bound
  $|x_1 - x_2| \geq \exp(-C_4 s^3)$.
  If $f(m)=0$, $m$ is a double root of $f$ and we can apply Theorem~\ref{marden}
  with $z_1=x_1$, $z_2=m$, $k_1=1$ and $k_2=2$.
  Hence there exists a zero $p$ of $f'$, distinct from $m$ and $x_1$,
  in the disk of center $M'=(x_1+m)/2$ and radius
  $$\rho=\frac{|x_1 - m|}{2}\cot\frac{\pi}{2(\gamma-2)}.$$
 Since $m$ and $p$
  are two distinct roots of the binomial $f'$, by Proposition~\ref{binsep}
  we have
  \begin{equation} \label{mplower}
    |m-p| \geq
    \frac{1}{|c\gamma|}\sqrt{2\left(1-\cos \frac{2\pi}{\gamma-1}\right)}.
    \end{equation}

  After this lower bound on $|m-p|$, we shall give an upper bound.
  Recall that $m$ lies in the disk of center $M=(x_1+x_2)/2$ and radius $r$
  defined in~(\ref{req}).
  By the triangle inequality,
  \begin{equation} \label{x1m}
    |x_1-m| \leq r+ |x_1-x_2|/2 = \frac{|x_1-x_2|}{2}[1+\cot(\pi/\gamma)],
    \end{equation}
  hence
  \begin{equation} \label{rhobound}
    \rho\leq \frac{|x_1 - x_2|}{4}[1+\cot(\pi/\gamma)]\cot\frac{\pi}{2(\gamma-2)}.
    \end{equation}
  Another application of the triangle inequality shows that
  $$|m-p| \leq |m-M'|+\rho=|m-x_1|/2+\rho.$$
  From~(\ref{x1m}) and~(\ref{rhobound}) we have
  \begin{equation} \label{mpupper}
    |m-p| \leq \frac{|x_1 - x_2|}{4}[1+\cot(\pi/\gamma)][1+\cot\frac{\pi}{2(\gamma-2)}].
    \end{equation}
  To conclude, we just have to put together~(\ref{mplower}) and~(\ref{mpupper}).
  The first inequality provides a lower bound of the form
  $|m-p|  \geq \exp(-C_7 s)$ and the second one an upper bound of the
  form $|m-p|  \leq  \exp(C_8 s)|x_1 - x_2|$.
As a result, $|x_1 - x_2| \geq \exp(-(C_7 + C_8) s)$.
 \end{proof}

\section{Final Remarks} \label{final}

In this paper we have obtained a separation bound for the complex roots of trinomials, and we have derived an (easier) algorithmic result:
the number of real roots of a trinomial can be computed in polynomial time.
We conclude with two open problems about trinomials and 4-nomials.
\begin{enumerate}
\item Is it possible to determine in polynomial time the sign of a trinomial $f$ at a rational point? The algorithm should run in time
polynomial in the bit size of the rational point $p/q$ and of the  sparse encoding of $f$ as  defined by~(\ref{size}).

The sign of a binomial at a rational point can be evaluated in polynomial time using Baker's theorem (this is a special case of the comparison problem for succinctly represented integers~\cite{Bastani11,Etessami14}). In time polynomial in the sparse encoding of a polynomial $f \in \zz[X]$ (with any number of monomials) one can also compute its sign  at an integer point~\cite{cucker99}, and one can determine all of its rational roots~\cite{Len99a}.

\item Is it possible to count the number of real roots of a 4-nomial $f$ in time polynomial in the sparse size of $f$? An algorithm for this problem was proposed in~\cite{Bastani11}. The authors perfom an average-case analysis 
  in the arithmetic model as well as in the Turing machine model. Like the present paper, they rely on Baker's theorem.
 Ideally, one would like to have an algorithm with a worst case polynomial running time in the Turing machine model.
\end{enumerate}

\small

\section*{Acknowledgements}
I would like to thank Yann Bugeaud and Michael Sagraloff for their feedback on the first version of this paper. 
In particular, Michael Sagraloff pointed out that in Section~\ref{algosec} the appeal to the root isolation algorithm from his joint paper with Gorav Jindal~\cite{jindal17} needed a more careful justification.
Thanks also go to the anonymous referees for their careful reading
of the manuscript and suggesting additional references.


\end{document}